\documentclass{article}
\usepackage{graphicx} % Required for inserting images
\usepackage{xcolor}
\usepackage{pgf-umlsd} % sequence diagram
\usepackage{amssymb}
\usepackage{amsmath} % xleftarrow
\usepackage{enumitem}
\setlist[itemize]{noitemsep} % compact itemize
\setlist[enumerate]{noitemsep}
\usepackage{url}

\newtheorem{definition}{Definition}[section]
\newtheorem{theorem}{Theorem}[section]
\newtheorem{lemma}{Lemma}[section]

\newenvironment{proof}{\textsf{Proof}.}{\hfill$\Box$}

\newcommand{\sig}[0]{\mathsf{S}}
\newcommand{\sigver}[0]{\mathsf{V}}

\newcommand{\pubkey}[0]{\mathsf{pk}}
\newcommand{\prikey}[0]{\mathsf{sk}}

\newcommand{\setup}[0]{\mathsf{Set}}
\newcommand{\commit}[0]{\mathsf{Com}}
\newcommand{\open}[0]{\mathsf{Open}}
\newcommand{\commitc}[0]{\mathsf{Com}^{c}}

\newcommand{\param}[0]{\mathsf{par}}

\newcommand{\unisrv}[0]{\mathsf{US}}

\newcommand{\sthash}[0]{h_\mathsf{st}}
\newcommand{\txhash}[0]{h_\mathsf{tx}}

\newcommand{\auxd}[0]{\mathsf{aux}}

\newcommand{\univer}[0]{\mathcal{V}}

\newcommand{\certver}[0]{\mathcal{V}_\mathsf{cert}}

\newcommand{\pinc}[0]{\pi_{\mathsf{inc}}}

\newcommand{\predi}[0]{\nu}
\newcommand{\systime}[0]{\tau}
\newcommand{\newtime}[0]{\mathsf{NT}}
\newcommand{\exttime}[0]{\mathsf{time}}
\newcommand{\prgen}[0]{\mathsf{G}_\mathsf{pr}}
\newcommand{\prsig}[0]{\mathsf{S}_\mathsf{pr}}

\title{ Unicity: Predicates and Atomic Swaps}
\author{
Ahto Buldas$^{1}$ \and
Dirk Draheim$^{2}$ \and
Mike Gault$^{3}$ \and
Risto Laanoja$^{3}$ \and
Vladimir Rogojin$^{3}$ \and
Ahto Truu$^{3}$
\\[1em]
$^{1}$ Tallinn University of Technology, Estonia, ahto.buldas@taltech.ee\\
$^{2}$ Tallinn University of Technology, Estonia, dirk.draheim@taltech.ee\\
$^{3}$ Unicity Labs OÜ, Estonia, ahto.truu@unicity-labs.com
}
\date{\today}

\begin{document}

\maketitle

\begin{abstract}
We generalize Unicity token ownership to programmable spending conditions called \emph{predicates}, enabling smart-contract like functionality executed off-chain directly by relying parties rather than by consensus participants. We prove that the security properties of the Unicity execution layer are preserved under reduction to predicate family unforgeability. To demonstrate the utility of the model, we show how to implement trustless atomic swaps by using predicates.
\end{abstract}

\section{Introduction}

Predicates generalize the concepts of token ownership and transfer in the Unicity infrastructure, which in
the paper \cite{BTLR25}
were defined via digital signatures as follows:
\begin{itemize}
\item \emph{Owner} -- a (legal/physical) person who controls the private key of a digital signature scheme
\item \emph{Ownership condition} -- the public key $\pubkey$ that corresponds to the private key of the owner.
\item \emph{Transfer} -- the owner presents a digital signature $\sigma$ on $m=H(\sthash,\txhash)$ (where $\sthash,\txhash$ are the state hash and the transaction hash, respectively) such that $\sigver(\pubkey,m,\sigma)=1$.
\end{itemize}
In this paper, we present the following generalized concepts:
\begin{itemize}
\item \emph{Owner} -- an abstract group of (legal/physical) persons that together control the information necessary to create the next transaction with the token. The information may include private keys.
\item \emph{Ownership condition} -- a logical condition (predicate) $\predi$.
\item \emph{Transfer} -- the owner (as a group) presents a bit-string $u$ such that the condition $\predi(\systime, m,u)=1$ holds, where
$m=H(\sthash,\txhash)$ and $\systime$ is the \emph{system time} (an integer defined by the Unicity service). This means that predicates may also put restrictions on transaction execution time.
\end{itemize}
So far, we only have used predicates of type $\predi(\tau,m,u)\equiv \sigver(\pubkey,m,\sigma)$, i.e.
all predicates $\predi$ are in the form $\sigver(\pubkey,\cdot,\cdot)$ and .

The predicates approach is certainly not new and is used already in the Bitcoin blockchain, where $\predi$ is called the \emph{locking script} and $u$ is called the \emph{unlocking script} or \emph{witness}.

A typical example of a generalized predicate is the \emph{delayed execution} predicate defined by $\predi(\systime,m,u)\equiv \mathsf{dex}_{\pubkey,\tau_0}(\systime,m,u) \equiv (\systime\ge\tau_0)\;\wedge\; \sigver(\pubkey,m,u)=1$, which states that the next transaction can be executed by the owner of the private key of $\pubkey$ not earlier than $\tau_0$. The delayed execution predicate is used in the protocols for inter-blockchain \emph{atomic swaps} between Bitcoin type blockchains.

It is a natural question whether the security properties (no double spending, no blocking, no association) will still hold in the Unicity infrastructure if the generalized predicates are in use. We will show shortly that double-spending is indeed impossible in the generalized scheme. However, the non-blocking condition is much less obvious.

The main concern is that arbitrarily chosen predicates do not have guaranteed security properties like the UF-CMA condition for digital signatures. For example, if a user chooses the ownership predicate $\predi(\systime,m,u)\equiv (u^2 - 2u + 1=0)$ then anyone who is able to solve quadratic equations can make the next transfer with the token and hence, the no-blocking condition may seem to be violated. On the other hand, by intentionally choosing such an ownership condition, the previous owner may indicate that the next abstract owner of the token is the group of all people who can efficiently solve quadratic equations, and in this sense, intuitively, the no blocking condition is not violated.

The predicates can also be chosen so that they cannot be solved (satisfied) in principle, i.e. they are logically inconsistent. For example
$\predi(\systime,m,u)\equiv (u^2 + 1=0)$, where $u$
is required to be real number, cannot be satisfied.

Another (less trivial) example of improper use of predicates is when a user applies one-time signature scheme as a many-times signature scheme. In this case, security-critical information leaks gradually so that an adversary, having triples
$(\systime_1,m_1,u_1),\ldots,(\systime_n,m_n,u_n)$ (so that $\predi(\systime_i,m_i,u_i)=1$), can construct a new triple $(\systime,m,u)$ such that $\predi(\systime,m,u)=1$ and $m\not\in\{m_1,\ldots,m_n\}$.

Therefore, the precise mathematical formulation of the no-blocking condition -- \emph{only the owner of the private key of $\pubkey$ can block the state $S=(\pubkey,\sthash)$} -- must be revisited. We will redefine the no-blocking condition as follows -- \emph{only those who can solve the predicate $\predi$ can block the state $S=(\predi,\sthash)$}, where by solving $\predi$ we mean finding, for a given $m$, a pair $(\systime,u)$ so that $\predi(\systime,m,u)=1$.
We will make this security definition precise and prove that it holds in the Unicity infrastructure. Intuitively, this means that undesired blocking can happen only because of the weakness of the user-chosen predicates and never because of the structural weakness of the Unicity infrastructure itself.

The paper is organized as follows. In Section~\ref{sec:predicates} we define the predicates and give some examples. In Section~\ref{sec:infra},
we show how to extend the Unicity Infrastructure with predicates.
In Section~\ref{sec:security}, we prove that using predicates will not affect the security properties of the Unicity infrastructure. In Section~\ref{sec:swap}, we discuss how to implement trustless atomic swaps by using predicates.

\section{Predicates}\label{sec:predicates}

In mathematics, a \emph{predicate} $\nu$ is a function
$\nu\colon \mathcal{D}\rightarrow \{0,1\}$ that for every \emph{argument} $d\in \mathcal{D}$ of the
\emph{domain} $\mathcal{D}$ defines a binary value $\nu(d)\in\{0,1\}$.
A \emph{parametrized family} of predicates is a set $\{\nu(\param;\cdot)\}_{\param\in\mathcal{P}}$ such that
for every \emph{parameter} $\param\in\mathcal{P}$, we have a predicate
$\nu(\param;\cdot)\colon \mathcal{D}_\param\rightarrow\{0,1\}$, i.e.
\[
\nu(\param;d)\in\{0,1\}
\]
for every argument $d\in\mathcal{D}_\param$.
In this section, we describe how predicates are described in the Unicity infrastructure and give some examples of predicates.

\subsection{Predicates in Unicity}
Unicity infrastructure needs a somewhat restricted form of predicates.
Every predicate $\nu_\mathsf{name}(\mathsf{par};\cdot)$ in Unicity has:
\begin{itemize}
\item \emph{Name} of the predicate denoted by $\mathsf{name}$
\item \emph{Parameters} of the predicate denoted by $\mathsf{par}$ the type of which depends on $\mathsf{name}$
\item \emph{Domain} $\mathcal{D}=\mathcal{D}_0\times\mathcal{D}_\mathsf{name}$, where $\mathcal{D}_0=\mathbb{T}\times\mathbb{H}$ is the standard part and
$\mathcal{D}_\mathsf{name}$ is the name-dependent part. Here, $\mathbb{T}$ is the set of all possible time values and $\mathbb{H}$ is the range of the hash function $H$ used in the Unicity system.
\end{itemize}
\medskip

\noindent \emph{Arguments} of the predicates are in the form $(\tau,m,u)$, where:
\begin{itemize}
\item $\tau\in\mathbb{T}$ is the system time (explained below)
\item $m\in \mathbb{H}$ is a hash value
\item $u\in\mathcal{D}_\mathsf{name}$ is the unlocking argument, the structure of which depends on both $\mathsf{name}$ and $\mathsf{par}$.
\end{itemize}
\medskip

\noindent\emph{System time} $\tau$ is a non-negative integer held by $\unisrv$ and incremented when $R$ is certified by the BFT layer.
System time can be extracted from inclusion proofs by $\tau'\gets \mathsf{time}(\pi)$, where $\tau'$ is the value of system time when $\pi$ was created.
\medskip

\noindent In a more technical description, the domain also contains a \emph{system information} component $\mathcal{T}$ that enables to verify the inclusion proofs. In actual implementation of the system, $\mathcal{T}$ contains the blockchain verification information that comes from the consensus (BFT) layer of the system, which consists of the block headers of the blockchain. We assume that $\unisrv$ and all users of the system have up to date version of $\mathcal{T}$ and, in this paper, we do not describe or study how this is guaranteed. Therefore, we omit $\mathcal{T}$ for simplicity. \medskip

\noindent
In software implementations, predicates are encoded similarly to public keys -- a predicate name code for $\mathsf{name}$ (like the algorithm identifier of generic public keys) followed by the binary representation of parameters $\mathsf{par}$.

\subsection{Examples of Predicates}
In this section, we provide some examples of predicates that are useful to define in the Unicity infrastructure.

\subsubsection{Signature Predicate}

The signature predicate  $\nu_\mathsf{sig}(\pubkey;\cdot)$ with a single public key $\pubkey$ as the parameter is equivalent to verification of a signatures with the public key $\pubkey$, or in Bitcoin terms, pay to the public key.
\medskip

\noindent\textbf{Standard arguments}: $\tau,m$, i.e. system time $\tau$ and hash of a transaction $m$. As all other predicates have the same standard arguments, we will omit them for the next predicates.\medskip

\noindent\textbf{Name-specific arguments}: $u=\sigma$, i.e. $u$ contains just a single digital signature $\sigma$.\medskip

\noindent\textbf{Definition}: $\nu_\mathsf{sig}(\pubkey;\tau,m,u)=1$ iff $\sigver(\pubkey,m,\sigma)=1$

\subsubsection{P2PKH Predicate}

The Pay-to-Public-Key-Hash (P2PKH) predicate $\nu_\mathsf{p2pkh}(h_\mathsf{pk};\cdot)$ with a public key hash $h_\mathsf{pk}$ as the parameter is equivalent to verification of a signature with a public key that hashes to $h_\mathsf{pk}$.
\medskip

\noindent\textbf{Name-specific arguments}: $u=(\pubkey, \sigma)$, i.e. $u$ contains the public key $\pubkey$ and a digital signature $\sigma$.\medskip

\noindent\textbf{Definition}: $\nu_\mathsf{p2pkh}(h_\mathsf{pk};\tau,m,u)=1$ iff:
\begin{itemize}
\item $H(\pubkey)=h_\mathsf{pk} \mathrel{\land} \sigver(\pubkey,m,\sigma)=1$
\end{itemize}

\subsubsection{Multi-Signature Predicate}
The multi-signature predicate
$\nu_\mathsf{msig}(\pubkey_1,\ldots,\pubkey_n;\cdot)$ with a list $\pubkey_1,\ldots,\pubkey_n$ of $n$ public keys as parameters represents verification of $n$ signatures on the same message hash. \medskip

\noindent\textbf{Name-specific arguments}: $u=(\sigma_1,\ldots,\sigma_n)$, where $\sigma_i$ are digital signatures.\medskip

\noindent\textbf{Definition}: $\nu_\mathsf{msig}(\pubkey_1,\ldots,\pubkey_n;\tau,m,u)=1$ iff:
\begin{itemize}
\item $\sigver(\pubkey_i,m,\sigma_i)=1$ for every $i\in\{1,\ldots,n\}$
\end{itemize}

\subsubsection{Threshold-Signature Predicate}
The threshold signature predicate
$\nu_\mathsf{tsig}(k,\pubkey_1,\ldots,\pubkey_n;\cdot)$ with signature threshold $k\in\{1,\ldots,n\}$ and a list $\pubkey_1,\ldots,\pubkey_n$ of $n$ public keys as parameters is equivalent to verifying $k$ digital signatures with different public keys (in the list) on the same message hash $m$. \medskip

\noindent\textbf{Name-specific arguments}: $u=((\sigma_1,j_1),\ldots,(\sigma_k,j_k))$ contain a list of pairs $(\sigma_i,j_i)$, where $j_i\in\{1,\ldots,n\}$ is the index of the public key that is supposed to be used for verifying $\sigma_i$. \medskip

\noindent\textbf{Definition}: $\nu_\mathsf{tsig}(k,\pubkey_1,\ldots,\pubkey_n;\tau,m,u)=1$ iff:
\begin{itemize}
\item[1.] $\sigver(\pubkey_{j_i},m,\sigma_i)=1$ for every $i\in\{1,\ldots,k\}$
\item[2.] All $j_i$ are different, i.e. $j_i=j_{i'}$ implies $i=i'$
\end{itemize}
The last requirement is necessary because otherwise the predicate can be satisfied by presenting $k$ copies of the same digital signature that verifies with the same public key, say $\pubkey_1$.

\subsubsection{Hashed Timelock Predicate}
The hashed timelock predicate
$\nu_\mathsf{htlc}(\pubkey,\pubkey',y,\tau_\mathsf{max};\cdot)$ with two public keys $\pubkey,\pubkey'$, a hash value $y\in\mathbb{H}$, and a timeout $\tau_\mathsf{max}\in\mathbb{T}$ as parameters is useful for atomic swaps, potentially between different blockchains \cite{Herl18,Bitc18a,Bitc18b}. We describe this predicate just as an example because the swap protocol described in this paper uses different predicates. \medskip

\noindent\textbf{Name-specific arguments}: $u=(x,\sigma)$, where $x$ is a hash value (a pre-image of $y$) and $\sigma$ is a a digital signature.\medskip

\noindent\textbf{Definition}: $\nu_\mathsf{htlc}(\pubkey,\pubkey',y,\tau_\mathsf{max};\tau,m,u)=1$ iff  (a) or (b), where:
\begin{itemize}
\item[(a)] $\sigver(\pubkey',m,\sigma)=1$, $y=H(x)$, and $\tau\le \tau_\mathsf{max}$
\item[(b)] $\sigver(\pubkey,m,\sigma)=1$ and $\tau> \tau_\mathsf{max}$
\end{itemize}
\medskip

\noindent
The owner of $\pubkey'$ is able to satisfy the predicate before system time $\tau_\mathsf{max}$ if he knows the $H$-preimage $x$ of $y$.
The owner of $\pubkey$ is able to satisfy the predicate after $\tau_\mathsf{max}$.

\section{Unicity Infrastructure with Predicates}\label{sec:infra}

\subsection{Unicity Service with Predicates}

Unicity service $\unisrv$ maintains a key/value store $R$ and system time $\tau$, which is a non-negative integer. Initially, $R[k]=\bot$ for every key $k$, and $\tau=0$.\medskip

\noindent Unicity service processes \emph{requests} $Q=(\nu,h,v,u)$,  \emph{status queries} $Q_\mathsf{st}=(k)$, and \emph{new time} messages $NT=(\tau_\mathsf{new})$:
\begin{itemize}
\item $\pi\gets\unisrv(Q)$: For $k=H(\nu,h)$ and $m=H(h,v)$:
\begin{itemize}
\item \textbf{If} $R[k]=\bot$ and $\nu(\tau, m, u)=1$ \textbf{then}:
   \begin{itemize}
      \item Assign $R[k]\gets (v,u)$ and \textbf{return} inclusion proof $\pi$
   \end{itemize}
\item \textbf{Else} \textbf{return} error
\end{itemize}
\item $(V,\pi)\gets \unisrv(Q_\mathsf{st})$: Create inclusion proof $\pi$ for $V=R[k]$ and return $(V,\pi)$. Note that $V=(v,u)$ or $V=\bot$
\item $\unisrv(NT)$: \textbf{If} $\tau<\tau_\mathsf{new}$ \textbf{then} $\tau\gets \tau_\mathsf{new}$.
\end{itemize}
Sometimes we use a shorthand notation $R[k]=v$ instead of $R[k]=(v,u)$\medskip

%\noindent\textbf{Verification function} $\mathcal{V}(k,v,\pi)=1$ iff $R[k]=v$ at $\mathsf{time}(\pi)$
%\medskip

\noindent During the normal work of the system, $\newtime$ is initiated by (and can only executed by) the consensus layer of the system. In attack scenarios, we will also give the adversary the access to the $\newtime$ functionality. This is for making the security conditions stronger. \medskip

\noindent It is easy to see that if $(R_0,\systime_0) = (\emptyset,0)$ is the initial state, $I_1, I_2, \ldots, I_n$ is any sequence of inputs (requests or new time inputs), $R_i$ is the dictionary after the input $I_i$, and $\systime_i$ is the system time after the input $I_i$, then:
\begin{itemize}
\item $\systime_0 \le \systime_1 \le \ldots \le \systime_n$, i.e.
the system time never decreases.
\item $R_0 \subseteq R_1 \subseteq \ldots \subseteq R_n$, i.e. the elements are never removed.
\item The state $R_n$ is a \emph{partial function}, i.e.
$\{(k, v), (k, v')\} \subseteq R$ implies $v = v'$.
\end{itemize}

We say that a key $k$ is \emph{blocked} if $R[k] \neq \bot$.

\subsection{Verification Function}

We assume that the inclusion proofs $\pinc$ contain the system time $\systime$, i.e. the value of $\systime$ when the corresponding new element was set in $R$. There is an extraction function $\exttime$ that
extracts the system time from the proof $\pinc$, i.e. $\systime\gets \exttime(\pinc)$.
We assume that the inclusion proofs $\pinc$ can be verified by any party using a verification function $\univer$ so that:
\begin{itemize}
\item If $\univer(k, v; \pinc) = 1$ then $R[k] = v$ in the current state $(R,\systime)$ of $\unisrv$. Hence, as $R$ is a partial function, for every $k, v, v', \pinc, \pinc'$, the following implication holds:
\begin{equation}\label{eq:eqtx-pred}
\univer(k, v; \pinc) = \univer(k, v'; \pinc') = 1 \quad \Rightarrow \quad v =v' \enspace.
\end{equation}
\item If $R[H(\predi, \sthash)] = \txhash$ after a request $\pinc \gets \unisrv(\predi, \sthash, \txhash, u)$ to the Unicity Service, then $\univer(H(\predi, \sthash), \txhash, \pinc) = 1$.
\item If a request $\pinc\gets\unisrv(Q)$ was processed in the state $S=(R,\systime)$ that changes $R[H(\predi, \sthash)]$ from $\bot$ to $\neq\bot$, then $\exttime(\pinc)=\systime$.
\end{itemize}

\subsection{Transactions with a Token}

Every token has a \emph{state hash} $\sthash$ and an abstract owner $A$ represented by a predicate $\predi$. The state hash $\sthash$ is initialized by the \emph{mint transaction} of the token. We will call the pair $(\predi, \sthash)$ the \emph{state} of the token.

The transaction payload (before certifying the transaction) with the token is a pair
$T = (\sthash, D)$, where:
\begin{enumerate}
\item $\sthash$ is the state hash before executing the transaction,
\item $D$ (transaction data) contains the following fields:
\begin{itemize}
\item $\predi'$: the predicate of the next abstract owner,
\item $x$: a uniformly chosen random string $x\gets\{0,1\}^\ell$,
\item $\auxd'$: auxiliary data for the next state.
\end{itemize}
\end{enumerate}

The pair $(\predi', \auxd')$ defines the next state of the token after executing the transaction $T$. The next state hash is $\sthash'=H(\sthash,x)$.
\medskip

\noindent\textbf{Certifying a transaction} $T = (\sthash, D)$ involves the following steps:
\begin{enumerate}
\item $(\txhash,d) \gets \commit(H(D))$ is computed using a perfectly hiding commitment scheme $(\setup,\commit,\open)$.
The commitment $\txhash$ is called the \emph{transaction data hash}.
\item The hash value $h_T = H(\sthash, \txhash)$ is computed.
\item A solution $u$ is created such that $\predi(\systime_\mathsf{exp}, h_T, u) = 1$ for an expected\footnote{In order to avoid failed certification calls, users could (1) query the current time from $\unisrv$ to
  minimize the difference, and (2) prioritize safety in predicate design. For example, the delayed
  execution predicate $\mathsf{dex}_{\pubkey,\tau_0}$ from the introduction uses an inequality
  $\systime \geq \tau_0$, which remains satisfiable for all future times.} value $\systime_\mathsf{exp}$ of system time.
\item The query $Q = (\predi,\sthash,\txhash,u)$ is created.
\item $\unisrv$ is called to obtain $\pi\gets \unisrv(Q)$.
\item The certified transaction $(T,u,\txhash,d,\pi)$ is formed.
\end{enumerate}

\noindent\textbf{Verifying a certified transaction} A certified transaction $(T,u,\txhash,d,\pi)$ is verified in the state $(\predi,h)$ by the following algorithm:\medskip

\noindent $\certver(T,u,\txhash,d,\pi;\predi,h)$:
If at least one of the following checks fail, return 0, otherwise return 1:
\begin{enumerate}
\item $T.\sthash=h$;
\item $\open(\txhash,d)=H(T.D)$;
\item $\predi(\exttime(\pi), H(\sthash,\txhash),u)=1$;
\item $\univer(H(\predi,T.\sthash),\txhash,\pi)=1$.
\end{enumerate}

\begin{definition}[certification in a state]\label{de:certstate-pred}
A tuple $(T,u,\txhash,d,\pi)$ is said to be \emph{certified in state} $(\predi,h)$ iff $\certver(T,u,\txhash,d,\pi;\predi,h)=1$.
\end{definition}

Mint transactions are the same as in the signature-based ownership \cite{BTLR25}, i.e. they do not use generalized predicates.

\subsection{Token Ledger}

A \emph{token ledger} is a sequence
\[
(T_0, u_0, \pi_0; h^1_\mathsf{st}), (T_1, u_1,\txhash^1,d_1,\pi_1; h^2_\mathsf{st}), \ldots, (T_n, u_n, \txhash^n, d_n,\pi_n; h^{n+1}_\mathsf{st})
\]
where $(T_0, u_0, \pi_0)$ is a certified mint transaction and for every index $i=1,\ldots, n$:
\begin{enumerate}
\item $(T_{i}, u_{i}, \txhash^i,d_i, \pi_{i})$ is a certified transaction in the state $(T_{i-1}.D.\predi', h^{i}_\mathsf{st})$;
\item $h^{i}_\mathsf{st}=H(h^{i-1}_\mathsf{st},x_{i-1})$ where $x_{i-1} =T_{i-1}.D.x$.
\end{enumerate}

\section{Security}\label{sec:security}

Consider a token with the state $S = (\predi, \auxd)$. The transfer protocol ensures the following properties:
\begin{itemize}
\item \emph{No double-spending}:
Only one certified transaction can be created in the state $S$.
\item \emph{No association}: the Unicity service is unable to identify the transactions with the same token.
\item \emph{No blocking}:
Only those who can solve the predicate $\predi$ can block the state $S = (\predi, \auxd)$ if it was not blocked before.
\end{itemize}

\noindent In this section, we present security proofs for all three properties. The proofs of no double-spending and no association are very similar to the proofs in the paper \cite{BTLR25}. The no blocking property
had to be modified to cover arbitrary predicates.

\subsection{Security against Double-Spending}

A double-spending adversary uses $\unisrv$ as an oracle. \medskip

\noindent\textbf{Double-spending scenario} involves the following steps:
\begin{enumerate}
\item $(T,u,\txhash,d,\pinc),(T',u',\txhash',d',\pinc'),(\predi,h)\gets A^\unisrv$.
\item The attack is successful iff $T\neq T'$ and
\begin{equation}\label{eq:dscond-pred}
\certver(T,u,\txhash,d,\pinc;\predi,h)=\certver(T',u',\txhash',d',\pinc';\predi,h)=1 \enspace.
\end{equation}
\end{enumerate}

\begin{definition}[Double-spending security]
The Unicity Service is said to be $S$-secure against double-spending if it has $S$ as a security profile\footnote{The concept of security profiles is defined in \cite{BTLR25}.} in the double-spending scenario.
\end{definition}

\noindent\textbf{Analysis}: If the adversary is successful, then
from (\ref{eq:dscond-pred}) and the definition of $\certver$ it follows that
$T.\sthash = T'.\sthash=h$ and:
\begin{eqnarray*}
\univer(H(\predi, h), \txhash; \pinc) = \univer(H(\predi, h), \txhash'; \pinc') = 1\enspace,
\end{eqnarray*}
which implies $\txhash=\txhash'$ by equation~(\ref{eq:eqtx-pred}).
From Def.~\ref{de:certstate-pred} it also follows that
$\open(\txhash,d)=H(T.D)$ and $\open(\txhash,d')=\open(\txhash',d')=H(T'.D)$.
From $(h,T.D)=(T.\sthash,T.D)=T\neq T'=(T'.\sthash,T'.D)=(h,T'.D)$ it follows that $T.D\neq T'.D$. Hence, we have two cases:
\begin{itemize}
\item[a)] $H(T.D)=H(T'.D)$, which
means that a collision has been found for $H$.
\item[b)] $H(T.D)\neq H(T'.D)$, which implies $\open(\txhash,d)=H(T.D)\neq H(T'.D)=\open(\txhash,d')$ and hence, the commitment $\txhash$ has been opened in two different ways.
\end{itemize}

\begin{theorem}
If $H$ is $S$-secure collision-resistant and the commitment scheme is $S$-secure computationally binding, then the Unicity service is $S_\mathsf{double}$-secure against double-spending, where
$S_\mathsf{double}(\epsilon) = \frac{S(\epsilon/2)}{t_\mathsf{ver}+1}$
and $t_\mathsf{ver}$ is the predicate verification time.
\end{theorem}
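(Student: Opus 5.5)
The proof is a reduction. From any double-spending adversary $A$ with running time $t$ and success probability $\delta$, build two adversaries: a collision finder $B_H$ for $H$ and a binding-breaker $B_C$ for the commitment scheme. Each runs $A$ once, simulating the oracle $\unisrv$, and then runs the verification algorithm $\certver$ on $A$'s two outputs. The success cases of $A$ split exactly as in the Analysis above. Whenever $A$ succeeds, either case a) occurs, so $(T.D,T'.D)$ is an $H$-collision, or case b) occurs, so $(\txhash,d,d')$ opens one commitment to two different values. Hence one of the two events has probability at least $\delta/2$. Fix whichever reduction handles the likelier case, or equivalently run both, and call it $B$. Then $B$ succeeds with probability $\epsilon' \ge \delta/2$.

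The next step is to bound $B$'s running time. Simulating $\unisrv$ is cheap apart from predicate evaluation. The extra work consists of evaluating $\certver$ on the two outputs, the checks $\nu(\tau,m,u)=1$ inside $A$'s oracle calls, and the final checks (hashing, $\open$, $\univer$). I would charge all of these to a cost of at most $t_\mathsf{ver}$ per step of $A$, with hashing and opening absorbed into the same unit. This gives $t_B \le t(t_\mathsf{ver}+1)$, where the $+1$ accounts for $A$'s own steps.

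To conclude, recall that $S$-security means any adversary achieving success $\epsilon'$ needs time at least $S(\epsilon')$, with $S$ monotone. From $t(t_\mathsf{ver}+1)\ge t_B \ge S(\epsilon')\ge S(\delta/2)$ we get $t \ge S(\delta/2)/(t_\mathsf{ver}+1)=S_\mathsf{double}(\delta)$. By the security-profile convention of \cite{BTLR25}, this is exactly the claim.

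The main obstacle is accounting for the running time. Each oracle query made by $A$ forces the simulator to evaluate a predicate $\nu$ that $A$ chose itself. The per-step overhead is therefore bounded only through $t_\mathsf{ver}$, which has to be interpreted as a uniform bound on predicate verification. I would state that assumption explicitly. I would also check that the factor $2$ and the factor $(t_\mathsf{ver}+1)$ compose correctly under the definition of security profile, and in particular that monotonicity of $S$ lets $\epsilon'\ge\delta/2$ be replaced by $\delta/2$.
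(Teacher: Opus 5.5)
Your proposal is correct and is essentially the paper's proof. It uses the same two reductions: a collision finder outputting $(T.D,T'.D)$ and a double-opener outputting $(\txhash,d,d')$. Each simulates $\unisrv$ at overhead $(t_\mathsf{ver}+1)\cdot t$, and they are combined via $\epsilon\le\epsilon_\mathsf{coll}+\epsilon_\mathsf{com}$ to get $S_\mathsf{double}(\epsilon)=S(\epsilon/2)/(t_\mathsf{ver}+1)$. The one superfluous step is rerunning $\certver$ on $A$'s outputs. The paper's reductions simply output their candidates, which are automatically a collision in case a) and a double opening in case b), so that extra, loosely absorbed cost can be dropped.
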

\begin{proof}
Let $A$ be a $t$-time double-spending adversary that succeeds with probability $\epsilon$. We construct a collision-finder $A_\mathsf{coll}$ for the hash function and a double-opening adversary $A_\mathsf{com}$ for the commitment scheme as follows:
\begin{itemize}
\item $A_\mathsf{coll}$ proceeds as follows:
\begin{enumerate}
\item Simulate $(T,u,\txhash,d,\pinc),(T',u',\txhash',d',\pinc'),(\predi,h)\gets A^\unisrv$ by maintaining its own version of $\unisrv$.
\item Output the pair $(T.D,T'.D)$.
\end{enumerate}
The computational overhead function of $A_\mathsf{coll}$ is $\tau_\mathsf{coll}(t) = (t_\mathsf{ver}+1) \cdot t$, because simulating a $\unisrv$ query requires one predicate verification and the number of calls is limited by the running time $t$ of $A$.
\item $A_\mathsf{com}$ proceeds as follows:
\begin{enumerate}
\item Simulate $(T,u,\txhash,d,\pinc),(T',u',\txhash',d',\pinc'),(\predi,h)\gets A^\unisrv$ by maintaining its own version of $\unisrv$.
\item Output the triple $(\txhash,d,d')$.
\end{enumerate}
The computational overhead function of $A_\mathsf{com}$ is the same as
that of $A_\mathsf{coll}$, i.e. $\tau_\mathsf{com}(t)= \tau_\mathsf{coll}(t)=(t_\mathsf{ver}+1) \cdot t$.
\end{itemize}
 If $A$ succeeds, then in case a) the collision finder $A_\mathsf{coll}$ succeeds, and in case b) the double-opener $A_\mathsf{com}$ succeeds. Hence, $\epsilon \le \epsilon_\mathsf{coll} + \epsilon_\mathsf{com}$, where $\epsilon_\mathsf{coll}$ is the success probability of $A_\mathsf{coll}$ and
$\epsilon_\mathsf{com}$ is the success probability of $A_\mathsf{com}$.
Therefore, the function $S_\mathsf{double}$ defined by
\[
S_\mathsf{double}(\epsilon) = \tau^{-1}_\mathsf{coll}(S(\epsilon/2)) =
\frac{S(\epsilon/2)}{t_\mathsf{ver}+1}
\]
is a security profile of the Unicity service against double-spending.
\end{proof}

\subsection{Security against Association}

The Association adversary $A=(A_1,A_2)$ is two-stage. \medskip

\noindent\textbf{Association scenario} involves the following steps:
\begin{enumerate}
\item $(\sthash, \predi', \auxd', a)\gets A_1$.
\item $x\gets \{0,1\}^\ell$.
\item $\sthash'\gets H(\sthash,x)$.
\item $\txhash\gets\commitc(H(\predi',x,\auxd')))$.  \footnote{We denote by $\commitc(\param,m)$ the function that computes $(c,d)\gets \commit(\param; m)$ and returns only $c$.}
\item $x'\gets A_2(a; \sthash',\txhash)$.
\item The attack is successful iff $\sthash\in\{0,1\}^k$, $x'\in\{0,1\}^\ell$, and $H(\sthash,x')=\sthash'$. The success $\epsilon$ of $A$ is the probability that the attack is successful.
\end{enumerate}

\begin{definition}[association security]
The Unicity Service is said to be $S$-secure against association if it has $S$ as a security profile in the association scenario.
\end{definition}

\begin{theorem}
If the hash function is $S$-secure $(k,\ell)$-one-way\footnote{Defined in \cite{BTLR25}.} and the commitment scheme is perfectly hiding, then the Unicity Service is $S_\mathsf{assoc}$-secure against association, where $S_\mathsf{assoc}(\epsilon)= S(\epsilon) - t_\mathsf{sm} - t_\mathsf{hash} - t_\mathsf{com}$, where
$t_\mathsf{sm}$, $t_\mathsf{hash}$, $t_\mathsf{com}$ are the random sampling time, the hashing time, and the commitment computation time, respectively.
\end{theorem}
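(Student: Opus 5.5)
The plan is a direct reduction: from an association adversary $A=(A_1,A_2)$ I build an inverter $B=(B_1,B_2)$ for the $(k,\ell)$-one-wayness game of $H$ (as defined in \cite{BTLR25}). That game is assumed to be two-stage. The first stage outputs a prefix $\sthash\in\{0,1\}^k$ and state $a$. The challenger samples $x\gets\{0,1\}^\ell$ and gives $y=H(\sthash,x)$. The second stage must output $x'$ with $H(\sthash,x')=y$. The only obstacle to a direct simulation is that $A_2$ also expects the commitment $\txhash=\commitc(H(\predi',x,\auxd'))$, which depends on the unknown $x$. Perfect hiding is what removes this dependence.

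\textbf{Construction of $B$.} $B_1$ runs $(\sthash,\predi',\auxd',a)\gets A_1$, outputs $\sthash$ to the challenger, and keeps $(\predi',\auxd',a)$ as its state. $B_2$ receives $y$ and does the following:
\begin{enumerate}
\item It samples its own independent $\tilde x\gets\{0,1\}^\ell$.
\item It computes $H(\predi',\tilde x,\auxd')$ and a commitment $\tilde\txhash\gets\commitc(H(\predi',\tilde x,\auxd'))$.
\item It runs $x'\gets A_2(a;y,\tilde\txhash)$ and outputs $x'$.
\end{enumerate}
The extra work over $A$ is one random sampling, one hashing and one commitment. So $B$ runs in time $t+t_\mathsf{sm}+t_\mathsf{hash}+t_\mathsf{com}$, which gives exactly the additive loss in $S_\mathsf{assoc}$.

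\textbf{Success probability.} I claim $B$ succeeds with the same probability $\epsilon$ as $A$. Perfect hiding means that, for every fixed message, the distribution of $\commitc(\cdot)$ is identical. So, conditioned on any values of $(\sthash,\predi',\auxd',a)$ and of $x$, the real $\txhash$ and the simulated $\tilde\txhash$ have the same distribution. In both experiments the commitment is therefore independent of $x$ given $A_1$'s output. Hence the joint distribution of $(a,\sthash,\sthash',\txhash)$ in the association game equals that of $(a,\sthash,y,\tilde\txhash)$ in $B$'s simulation. The success event $H(\sthash,x')=\sthash'$ is a function of this view, $A_2$'s coins and the public $\sthash$. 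It is also exactly the inversion success event, with the same checks $\sthash\in\{0,1\}^k$ and $x'\in\{0,1\}^\ell$. So the probabilities coincide.

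\textbf{Concluding the bound.} Suppose $A$ runs in time $t< S_\mathsf{assoc}(\epsilon)=S(\epsilon)-t_\mathsf{sm}-t_\mathsf{hash}-t_\mathsf{com}$. Then $B$ runs in time less than $S(\epsilon)$ and still succeeds with probability $\epsilon$, contradicting $S$-secure one-wayness. Therefore $S_\mathsf{assoc}$ is a security profile in the association scenario.

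The step needing the most care is the distributional equivalence. I must check that the commitment parameters (from $\setup$) are generated identically in both games. I must also check that perfect hiding is applied conditionally on $A_1$'s output, so that a possibly correlated choice of $\predi',\auxd'$ does not break the argument. I would state this as a single hybrid step to keep it rigorous.
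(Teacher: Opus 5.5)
Your proposal is correct and is essentially the paper's proof. It uses the same reduction: run $A_1$, sample a fresh $x''$, commit to $H(\predi',x'',\auxd')$, and run $A_2$ on the challenge $y$ together with this simulated commitment, with the same additive overhead $t_\mathsf{sm}+t_\mathsf{hash}+t_\mathsf{com}$. The only minor difference is that you derive the independence of the real commitment from $x$ directly from perfect hiding by conditioning on $x$, whereas the paper invokes the Output Independence Lemma of \cite{BTLR25}.
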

\begin{proof}
Let $A=(A_1,A_2)$ be a $t$-time adversary that succeeds in the association scenario with probability $\epsilon$. Consider the following modified attack scenario:
\begin{enumerate}
\item $(\sthash, \predi', \auxd', a)\gets A_1$.
\item $x\gets \{0,1\}^\ell$.
\item $\sthash'\gets H(\sthash,x)$.
\item $x''\gets \{0,1\}^\ell$.
\item $\txhash'\gets\commitc(H(\predi',x'',\auxd')))$.
\item $x'\gets A_2(a; \sthash',\txhash')$.
\item The attack is successful iff $\sthash\in\{0,1\}^k$, $x'\in\{0,1\}^\ell$, and $H(\sthash,x')=\sthash'$.
\end{enumerate}
%For any fixed value of $L=(\sthash, \pubkey', \auxd', a)$, due to the perfect hiding, the probability distributions of $\txhash$ and $\txhash'$ are equal and due to Lemma~\ref{le:outputindependence} (by taking $g(x)=H(\pubkey',x,\auxd'))$), the random variables $\txhash$ and $\txhash'$ are both independent of $x$. This means that the arguments $(\sthash',\txhash)$ and $(\sthash',\txhash')$ of $A(a;\cdot)$ are equal
%in both scenarios, and hence $A$ succeeds in the modified scenario with probability $\epsilon$. We construct an adversary $A'=(A'_1,A'_2)$ as follows:

\noindent For any fixed value of $L=(\sthash, \predi', \auxd', a)$, due to perfect hiding, commitments $\txhash=\commitc(H(\predi',x,\auxd'))$ and $\txhash'=\commitc(H(\predi',x'',\auxd'))$ have equal probability distributions. Moreover, by the Output Independence Lemma\footnote{Defined in \cite{BTLR25}.} (with $g(x)=H(\predi',x,\auxd')$), the random variables $x$ and $\txhash=\commitc(H(\predi',x,\auxd'))$ are independent.

Since $x''$ and $x$ are independent, the commitment $\txhash'=\commitc(H(\predi',x'',\auxd'))$ is independent of both $x$ and $\sthash'=H(\sthash,x)$. Therefore, the joint distributions of $(\sthash',\txhash)$ and $(\sthash',\txhash')$ are equal, and hence $A$ succeeds in the modified scenario with probability $\epsilon$. We construct an adversary $A'=(A'_1,A'_2)$ as follows:
\begin{itemize}
\item $A'_1$ proceeds as follows:
\begin{enumerate}
\item $(\sthash, \predi', \auxd', a)\gets A_1$;
\item return $(\sthash, a')$, where $a'=(\predi', \auxd', a)$.
\end{enumerate}
\item $A'_2(a';y)$ with $a'=(\predi', \auxd', a)$ proceeds as follows:
\begin{enumerate}
\item $x''\gets \{0,1\}^\ell$;
\item $\txhash'\gets\commitc(H(\predi',x'',\auxd')))$;
\item $x'\gets A_2(a; y,\txhash')$;
\item return $x'$.
\end{enumerate}
\end{itemize}

\noindent The computational time overhead function of $A'$ is $\tau(t) = t + t_\mathsf{sm} + t_\mathsf{hash} + t_\mathsf{com}$ and hence, the function $S_\mathsf{assoc}$ defined by
\[
S_\mathsf{assoc}(\epsilon) = \tau^{-1}(S(\epsilon)) = S(\epsilon) - t_\mathsf{sm} - t_\mathsf{hash} - t_\mathsf{com}
\]
is a security profile of the Unicity service against association.
\end{proof}

\subsection{Security against Blocking}

By a predicate family we mean a pair $(\prgen,\prsig)$ of algorithms so that:
\begin{itemize}
\item $(\prikey,\predi)\gets\prgen$ generates a private key $\prikey$ and a predicate $\predi$.
\item $\prsig(\prikey,m)$ solves the predicate for $m$, i.e. either $\bot\gets\prsig(\prikey,m)$ (the solver gives up) or $(\systime,u)\gets\prsig(\prikey,m)$ such that $\predi(\systime,m,u)=1$.
\end{itemize}
The case $\bot\gets\prsig(\prikey,m)$ is necessary because the predicates can potentially be chosen so that they cannot be satisfied.

\medskip
\noindent A predicate solving adversary $A_\mathsf{solve}^\mathcal{O}$ for the predicate family $(\prgen,\prsig)$ uses an oracle $\mathcal{O}$
that uses (initially empty) dictionaries $\prikey[\iota]$, $\predi[\iota]$ as the state and answers to two types of queries:
\begin{itemize}
\item $\mathcal{O}(\mathsf{gen};\iota)$ -- a generation query that returns $\bot$ if $\predi[\iota]\neq \bot$, and otherwise generates $(\prikey,\predi)\gets\prgen$, sets $\prikey[\iota]\gets \prikey$, $\predi[\iota]\gets\predi$, and
returns $\predi$.
\item $\mathcal{O}(\mathsf{solve};\iota,m)$ -- a solving query that returns
$\bot$ if either $\predi[\iota]=\bot$ or $\prsig(\prikey[\iota],m)=\bot$, and otherwise, if $(\systime,u)\gets \prsig(\prikey[\iota],m)$, it returns $(\systime,u)$.
\end{itemize}
The predicate solving scenario involves the following steps:
\begin{enumerate}
\item $(\iota,\systime,m,u)\gets A_\mathsf{solve}^\mathcal{O}$
\item The attack is successful if:
\begin{itemize}
\item[a)] $\predi[\iota]\neq\bot$, i.e. the query $\mathcal{O}(\mathsf{gen};\iota)$ was made by $A_\mathsf{solve}^\mathcal{O}$.
\item[b)] $\predi[\iota](\systime,m,u)=1$
\item[c)] All queries of the form $\mathcal{O}(\mathsf{solve};\iota,m)$ made by  $A_\mathsf{solve}^\mathcal{O}$ (if there were any) were answered with $\bot$.
\end{itemize}
\end{enumerate}

\noindent A blocking adversary $A$ uses two oracles:
\begin{enumerate}
\item $\mathsf{US}$: the Unicity Service,
\item $\mathsf{TS}$: that uses (initially empty) dictionaries $\prikey[\iota]$, $\predi[\iota]$ as the state and answers to two types of queries:
\begin{itemize}
\item $\mathsf{TS}(\mathsf{gen};\iota)$ -- a generation query that returns $\bot$ if $\prikey[\iota]\neq \bot$, and otherwise generates $(\prikey,\predi)\gets\prgen$, sets $\prikey[\iota]\gets \prikey$, $\predi[\iota]\gets\predi$, and
returns $\predi$.
\item $\mathsf{TS}(\mathsf{solve};\iota,h,D)$ -- a solving query that returns
$\bot$ if either $\prikey[\iota]=\bot$ or $\prsig(\prikey[\iota],H(h,\txhash))=\bot$, and otherwise, returns $(\systime,u, \txhash, d)$, where
$(\txhash, d)\gets \commit(H(D))$ and
$(\systime,u)\gets \prsig(\prikey[\iota],H(h,\txhash))$.
\end{itemize}
\end{enumerate}

\noindent\textbf{Blocking scenario} involves the following steps:
\begin{enumerate}
\item $(\iota,\sthash)\gets A^{\mathsf{US},\mathsf{TS}}$
\item $A$ is successful if:
\begin{itemize}
\item[a)] $\predi[\iota]\neq\bot$
\item[b)] $R[H(\predi[\iota],\sthash)]\neq\bot$ after the scenario
\item[c)] No (successful) queries of the form $\mathsf{TS}(\mathsf{solve};\iota,\sthash,D)$ were made.
\end{itemize}
The success $\epsilon$ of $A$ is the probability that the attack is successful
\end{enumerate}

\noindent Note that if such a query was made, then the request $Q=(\predi[\iota],\sthash, \txhash,u)$ to $\mathsf{US}$ at system time $\systime$ will trivially ensure $R[H(\predi[\iota],\sthash)]\neq\bot$.
Note that the adversary can set the system time appropriately before the request $Q$. Hence, this is excluded by the security condition. \medskip

\begin{definition}[blocking security]
The Unicity Service is said to be $S$-secure against blocking if it has $S$ as a security profile in the blocking scenario.
\end{definition}

\noindent\textbf{Analysis}: The adversary $A$ can be successful in the following cases:
\begin{itemize}
\item[a)] A request $Q=(\predi', h'_\mathsf{st}, h_\mathsf{tx}, u)$ with $(\predi',h'_\mathsf{st})\neq (\predi[\iota],\sthash)$ to $\mathsf{US}$ enforces $R[H(\predi[\iota], \sthash)]\neq\bot$, which means that  $H(\predi[\iota],\sthash)=H(\predi', h'_\mathsf{st})$ and hence, a collision for $H$ was found.
\item[b)] A request $Q=(\predi[\iota], \sthash, h_\mathsf{tx}, u)$ to $\mathsf{US}$ enforces $R[H(\predi[\iota], \sthash)]\neq\bot$, which implies $\predi[\iota](\systime, H(\sthash,h_\mathsf{tx}), u)=1$ from the description of $\mathsf{US}$. Then we have two possibilities:
\begin{itemize}
\item[b1)] A query $(\systime',u', \txhash', d)\gets\mathsf{TS}(\mathsf{solve};\iota,\sthash',D)$ was made such that the equality
$H(\sthash',\txhash')=H(\sthash,\txhash)$ holds. From the success condition of $A$ it follows that  $\sthash'\neq \sthash$ and we have a collision for $H$.
\item[b2)] If no queries $(\systime',u', \txhash', d)\gets\mathsf{TS}(\mathsf{solve};\iota,\sthash',D)$ were made with $H(\sthash',\txhash')=H(\sthash,\txhash)$ then this means that $A$ was able to
solve the predicate family, i.e. for $m=H(\sthash,\txhash)$ finds $\tau,u$
so that $\predi[\iota](\tau,m,u)=1$ without using the predicate solving functionality.
\end{itemize}
\end{itemize}

\begin{theorem}
If the predicate family $(\prgen,\prsig)$ is $S$-secure against solving and the hash function is $S$-secure collision-resistant, then the Unicity service is
$S_\mathsf{block}$-secure against blocking, where $S_\mathsf{block}(\epsilon) =
\frac{S(\epsilon/2)}{\max\{t_\mathsf{gen}, t_\mathsf{sig}, t_\mathsf{ver},t_\mathsf{com},t_\mathsf{hash}\}}$
and
$t_\mathsf{gen}$, $t_\mathsf{sig}$, $t_\mathsf{ver},t_\mathsf{com},t_\mathsf{hash}$ are the
key generation time (for $\prgen$), solving time (for $\prsig$), verification time (for $\predi$), commitment time, and hashing time, respectively.
\end{theorem}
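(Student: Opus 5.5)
We follow the pattern of the double-spending proof and reduce to the case analysis given just before the statement. Let $A$ be a $t$-time blocking adversary with success $\epsilon$. We build two adversaries that run $A$ internally: a collision finder $A_\mathsf{coll}$ for $H$ and a predicate-solving adversary $A_\mathsf{solve}^\mathcal{O}$ for $(\prgen,\prsig)$. The success event then splits into the cases a), b1), b2) of the analysis. Cases a) and b1) yield collisions, and case b2) yields a forgery, so $\epsilon\le\epsilon_\mathsf{coll}+\epsilon_\mathsf{solve}$. Hence one of the two reductions succeeds with probability at least $\epsilon/2$.

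\emph{Collision finder.} $A_\mathsf{coll}$ runs $A^{\mathsf{US},\mathsf{TS}}$ and simulates both oracles itself. For $\mathsf{TS}$ it runs $\prgen$ and $\prsig$ on its own; for $\mathsf{TS}(\mathsf{solve})$ it additionally computes $\commit(H(D))$ and one hash. For $\mathsf{US}$ it keeps its own $R$ and time $\systime$, and answers each request with one predicate verification and a couple of hashes. It records every preimage pair it hashes, namely all pairs $(\predi',h'_\mathsf{st})$ used as keys and all pairs $(\sthash',\txhash')$ produced in $\mathsf{TS}$ solving queries. At the end it reads the output $(\iota,\sthash)$ and looks for the request that set $R[H(\predi[\iota],\sthash)]$:
\begin{itemize}
\item In case a) it outputs the collision $(\predi[\iota],\sthash),(\predi',h'_\mathsf{st})$.
\item In case b1) it outputs $(\sthash,\txhash),(\sthash',\txhash')$. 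These are distinct because condition c) of the scenario forbids $\sthash'=\sthash$ for a solving query on index $\iota$.
\end{itemize}

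\emph{Predicate solver.} $A_\mathsf{solve}^\mathcal{O}$ runs $A$ and answers $\mathsf{TS}(\mathsf{gen};\iota)$ by forwarding to $\mathcal{O}(\mathsf{gen};\iota)$. It answers $\mathsf{TS}(\mathsf{solve};\iota,h,D)$ by computing $(\txhash,d)\gets\commit(H(D))$ itself, querying $\mathcal{O}(\mathsf{solve};\iota,H(h,\txhash))$, and returning the result together with $(\txhash,d)$. It simulates $\mathsf{US}$ locally by evaluating predicates. At the end it takes the request $Q=(\predi[\iota],\sthash,\txhash,u)$, processed at time $\systime$, that blocked the key, and outputs $(\iota,\systime,H(\sthash,\txhash),u)$. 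By the definition of $\mathsf{US}$ we have $\predi[\iota](\systime,m,u)=1$. In case b2) no solving query on $\iota$ was made with message $m=H(\sthash,\txhash)$, so conditions a)--c) of the predicate-solving scenario hold.

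\textbf{Main obstacle.} The delicate point is matching condition c) of the solving game, which concerns all queries $\mathcal{O}(\mathsf{solve};\iota,m)$ on the forged message $m$. Case b2) is defined precisely so that no simulated solving query produced this $m$. However, I must check that a solving query that returned $\bot$ is harmless, and that distinct $(h,D)$ leading to the same $m$ are absorbed into b1). I would define b1) as ``some $\mathsf{TS}$ solving query on $\iota$ produced message $m$'', whether or not it succeeded, so that b2) is exactly condition c).

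\textbf{Time accounting.} Each of $A$'s at most $t$ oracle calls costs the simulator at most a constant multiple of $\max\{t_\mathsf{gen},t_\mathsf{sig},t_\mathsf{ver},t_\mathsf{com},t_\mathsf{hash}\}$. This gives overhead $\tau(t)=t\cdot\max\{\ldots\}$, after absorbing constants as the paper does in the double-spending theorem. Inverting this, $S_\mathsf{block}(\epsilon)=\tau^{-1}(S(\epsilon/2))$, which is the stated profile.
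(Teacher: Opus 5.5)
Your construction is the paper's: the same two reductions $A_\mathsf{coll}$ and $A_\mathsf{solve}^\mathcal{O}$, the same a)/b1)/b2) split, the bound $\epsilon\le\epsilon_\mathsf{coll}+\epsilon_\mathsf{solve}$, and the same time accounting. The one place you depart from it is your resolution of the ``main obstacle'', and that step is wrong as stated, though easy to repair.

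Condition c) of the predicate-solving game does not forbid queries $\mathcal{O}(\mathsf{solve};\iota,m)$ on the output message. It only requires that every such query was answered with $\bot$. So failed queries are harmless by definition, and nothing needs to be moved out of b2).

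Moving them into b1) actually breaks your collision step. Condition c) of the blocking game excludes only \emph{successful} queries $\mathsf{TS}(\mathsf{solve};\iota,\sthash,D)$. A query with $\sthash'=\sthash$ on which $\prsig$ returned $\bot$ is therefore permitted. If its internal commitment $\txhash'$ happens to equal the $\txhash$ of the blocking request, then $(\sthash',\txhash')=(\sthash,\txhash)$, and $A_\mathsf{coll}$ has no collision to output. Your justification ``condition c) forbids $\sthash'=\sthash$'' is valid only for successful queries.

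The split has to be by success, as in the paper, whose b1) refers to queries that returned $(\systime',u',\txhash',d)$:
\begin{itemize}
\item b1) is the case where some $\mathsf{TS}(\mathsf{solve};\iota,\sthash',D)$ query not answered with $\bot$ produced $m=H(\sthash,\txhash)$. Then $\sthash'\neq\sthash$ by the blocking game's condition c), so you get a collision.
\item b2) is the complement: every simulated query $\mathcal{O}(\mathsf{solve};\iota,m)$ was answered with $\bot$. This is exactly condition c) of the solving game.
\end{itemize}
Relatedly, when $\mathcal{O}$ answers $\bot$, your simulator should return only $\bot$, not $\bot$ together with $(\txhash,d)$, so that it matches the real $\mathsf{TS}$. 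Since your solver outputs unconditionally, with this split the inequality $\epsilon\le\epsilon_\mathsf{coll}+\epsilon_\mathsf{solve}$ goes through.
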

\begin{proof}
Let $A$ be a $t$-time blocking adversary that succeeds with probability $\epsilon$. We construct a collision-finder $A_\mathsf{coll}$ and a
solver $A^\mathcal{O}_\mathsf{solve}$ as follows:
\begin{itemize}
\item $A_\mathsf{coll}$ proceeds as follows:
 \begin{enumerate}
   \item Simulates $(\iota,\sthash)\gets A^{\mathsf{US},\mathsf{TS}}$ and records all the oracle queries.
   \item If $A^{\mathsf{US},\mathsf{TS}}$ was successful and either the case a) or b1) occurs, $A_\mathsf{coll}$ outputs the collision that is guaranteed in this case.
 \end{enumerate}
The oracles are simulated as follows:
 \begin{itemize}
   \item $\mathsf{US}$-queries: $A_\mathsf{coll}$ maintains its own version of $R$.
   \item $\mathsf{TS}$-queries: $A_\mathsf{coll}$ directly uses $\prgen$ and $\prsig$.
\end{itemize}
The computational time overhead function for the construction of $A_\mathsf{coll}$ is $\tau_\mathsf{coll}(t) = \max\{t_\mathsf{gen}, t_\mathsf{sig}, t_\mathsf{ver}\}\cdot t$, where
$t_\mathsf{ver}$ is the predicate verification time (for $\unisrv$-queries),
$t_\mathsf{sig}$ is the predicate solving time (for $\mathsf{TS}(\mathsf{solve};\cdot)$-queries), and $t_\mathsf{gen}$ is the generation time (for $\mathsf{TS}(\mathsf{gen};\cdot)$-queries).

\item $A_\mathsf{solve}^{\mathcal{O}}$ proceeds as follows:
 \begin{enumerate}
  \item Simulates $(\iota,\sthash)\gets A^{\mathsf{US},\mathsf{TS}}$ and records all the oracle queries.
  \item If $A^{\mathsf{US},\mathsf{TS}}$ was successful and b2) occurs and $Q=(\predi', \sthash, h_\mathsf{tx}, u)$ was the request that enforces $R[H(\predi[\iota], \sthash)]\neq\bot$ at system time $\tau$ then:
  \begin{enumerate}
    \item[3.] $m\gets H(\sthash, h_\mathsf{tx})$.
    \item[4.] Output $(\iota,\tau,m,u)$.
  \end{enumerate}
 \end{enumerate}
The oracles are simulated as follows:
 \begin{itemize}
  \item $\mathsf{US}$-queries are simulated so that $A_\mathsf{solve}^\mathcal{O}$ maintains its own version of $R$.
  \item $\mathsf{US}$-queries are simulated so that $A_\mathsf{solve}^\mathcal{O}$ maintains its own copy of the dictionary $\predi[]$ and processes the queries as follows:
  \begin{itemize}
      \item $\mathsf{TS}(\mathsf{gen};\iota)$ -- If $\predi[\iota]\neq \bot$ then return $\bot$. Otherwise, call $\predi[\iota]
      \gets\mathcal{O}(\mathsf{gen};\iota)$ and return $\predi[\iota]$.
      \item $\mathsf{TS}(\mathsf{solve};\iota,h,D)$ -- If $\predi[\iota]=\bot$ then return $\bot$. Otherwise, compute $(\txhash,d)\gets\commit(H(D))$, call $v\gets \mathcal{O}(\mathsf{solve};\iota, H(h,\txhash))$ and return $v$. Note that either $v=\bot$ or $v=(\systime,u)$.
  \end{itemize}
 \end{itemize}
 As the request $Q$ was accepted by $\unisrv$ and it changes $R[H(\predi[\iota],\sthash)]$, we have $\predi'=\predi[\iota]$ (description of $\unisrv$) and
 $\predi[\iota](\systime,m,u)=1$. Note that in the case b2) the request $\mathsf{TS}(\mathsf{solve}; \iota, \sthash,D)$ with $\open(\txhash,d)=H(D)$ was never made which also means that the query $\mathcal{O}(\mathsf{solve};\iota,m)$ with $m=H(\sthash,\txhash)$ was never made, and hence,
 $A_\mathsf{solve}^{\mathcal{O}}$ is successful in the predicate solving scenario. The computational time overhead function for the construction of $A_\mathsf{solve}^\mathcal{O}$, on the assumption that $t\ge 1$, is
 \[
 \tau_\mathsf{solve}(t) = \max\{t_\mathsf{ver},t_\mathsf{com}\}\cdot t + t_\mathsf{hash}\le \max\{t_\mathsf{ver},t_\mathsf{com},t_\mathsf{hash}\}\cdot t\]
where
$t_\mathsf{ver}$ is the predicate verification time (for $\unisrv$ queries), $t_\mathsf{com}$ is the commitment time (for $\mathsf{TS}(\mathsf{solve};\cdot)$-queries) and
$t_\mathsf{hash}$ is the hash computation time (for output).
\end{itemize}
If $A$ succeeds, then either $A_\mathsf{coll}$ or $A_\mathsf{solve}^\mathcal{O}$ succeeds and hence $\epsilon \le \epsilon_\mathsf{coll} + \epsilon_\mathsf{solve}$.
As for the function $\tau(t)=\max\{t_\mathsf{gen}, t_\mathsf{sig}, t_\mathsf{ver},t_\mathsf{com},t_\mathsf{hash}\}\cdot t$ the inequalities $\tau_\mathsf{coll}(t)\le \tau(t)$
and $\tau_\mathsf{solve}(t)\le \tau(t)$ hold, we imply that
\[
S_\mathsf{block}(\epsilon) = \tau^{-1}(S(\epsilon/2)) =
\frac{S(\epsilon/2)}{\max\{t_\mathsf{gen}, t_\mathsf{sig}, t_\mathsf{ver},t_\mathsf{com},t_\mathsf{hash}\}}
\]
is a security profile of the Unicity Service against blocking.
\end{proof}

\section{Atomic Swap in the Unicity Infrastructure}\label{sec:swap}
In this section, we describe an atomic commit protocol in the Unicity Infrastructure that uses specific predicates. The protocol has several advantages compared to the hashed timelock based inter-blockchain atomic swaps used for example in Bitcoin-like blockchains. For example, it is secure even if unconditional delays occur in the system.
The key aspect of success is the presence of a common reference information provided by the Unicity service.

\subsection{Motivation and General Idea}
The goal of atomic swap is to fairly and securely exchange tokens between their owners. Assume users $A$ and $B$ own tokens $\mathsf{Tok}_A$ and $\mathsf{Tok}_B$ and they want to exchange their tokens so that after the exchange, $A$ owns $\mathsf{Tok}_B$ and vice versa. The parties agree to a certain timeout and run the swap protocol so that the following properties hold:
\begin{enumerate}
\item If both parties follow the protocol, then the tokens change their owners.
\item If at least one party deviates from the protocol, then $A$ and $B$ will retain control over their tokens after the agreed timeout.
\end{enumerate}
To achieve this goal, the following multi-stage protocol is used:
\begin{enumerate}
\item The parties transfer their tokens to special \emph{preparation states} and agree on the timeout.
\item The parties send each other the whole token ledgers and verify whether the tokens are indeed in the preparation states (based on the ledgers).
\item The parties (are supposed to) \emph{commit}, i.e. transfer (independently) their tokens (directly from the preparation states) to special \emph{commit (swap) states}.
\end{enumerate}
The swap states are defined (via suitable ownership predicates) so that:
\begin{enumerate}
\item If $\mathsf{Tok}_A$ is in the swap state, then $B$ will control (can make the next transaction with) $\mathsf{Tok}_A$ if $B$ committed $\mathsf{Tok}_B$ in time (before the timeout), and otherwise, $A$ will retain its control after the timeout.
\item If $\mathsf{Tok}_B$ is in the swap state, then $A$ will control (can make the next transaction with) $\mathsf{Tok}_B$ if $A$ committed $\mathsf{Tok}_A$ in time (before the timeout), and otherwise, $B$ will retain its control after the timeout.
\end{enumerate}
The main question is how to define the abstract ownership predicates for the preparation states and for the swap states that guarantee the requirements of the atomic swap protocol, and how to define the unlocking witness $u$ of the swap predicate so that it includes the knowledge about actions of the other party, considering that the commitments may happen independently without direct communication between the parties.

\subsection{Design Choices of the Protocol}

Assume that initially, $\mathsf{Tok}_A$ and $\mathsf{Tok}_B$ are in "pure" states
$S^A_0=(\nu_\mathsf{sig}(\pubkey_A;\cdot), h^0_A)$ and
$S^B_0=(\nu_\mathsf{sig}(\pubkey_B;\cdot),h^0_B)$, respectively, i.e. their use (next transfer) requires digital signatures of the parties. \medskip

\noindent In the \textbf{preparation phase}, the parties transfer their tokens to the preparation states $S^A_\mathsf{prep}=(\nu_\mathsf{prep}(\pubkey_A;\cdot),h^1_A)$ and $S^B_\mathsf{prep}=(\nu_\mathsf{prep}(\pubkey_B;\cdot),h^1_B)$, respectively by executing transactions $T^A_\mathsf{prep}=(h^0_A, D^A_\mathsf{prep})$ and
$T^B_\mathsf{prep}=(h^0_B, D^B_\mathsf{prep})$,
where:
\begin{eqnarray*}
D^A_\mathsf{prep}&=&(\nu_\mathsf{prep}(\pubkey_A;\cdot),x^A_\mathsf{prep},\mathsf{aux}^A_\mathsf{prep})\\
D^B_\mathsf{prep}&=&(\nu_\mathsf{prep}(\pubkey_B;\cdot),x^B_\mathsf{prep},\mathsf{aux}^B_\mathsf{prep})\enspace.
\end{eqnarray*}
The execution means that the parties sign those transactions, obtain the inclusion proofs $\pi^A_\mathsf{prep}$ and $\pi^B_\mathsf{prep}$ from $\unisrv$, and add certified transactions $C^A_\mathsf{prep}=(T^A_\mathsf{prep},\sigma^A_\mathsf{prep},\pi^A_\mathsf{prep})$ and $C^B_\mathsf{prep}=(T^B_\mathsf{prep},\sigma^B_\mathsf{prep},\pi^B_\mathsf{prep})$ to their ledgers. As the parties are supposed to exchange their ledgers, they can both see $T^A_\mathsf{prep}$ and $T^B_\mathsf{prep}$ and can check that the transactions contain the correct preparation predicates. Note also that the predicate $\nu_\mathsf{prep}$ is defined in a way that after preparation, both parties still have full control over their tokens, i.e. transferring the tokens further requires only the knowledge of their own private keys.
\medskip

\noindent In the \textbf{commit phase}, the parties transfer their tokens to the swap states:
\begin{eqnarray*}
S^A_\mathsf{swap}&=&(\nu_\mathsf{swap}(\pubkey_A,h^1_A,\pubkey_B,h^1_B,\tau_\mathsf{max};\cdot),h^2_A)\\
S^B_\mathsf{swap}&=&(\nu_\mathsf{swap}(\pubkey_B,h^1_B,\pubkey_A,h^1_A,\tau_\mathsf{max};\cdot),h^2_B)
\end{eqnarray*}
by executing transactions $T^A_\mathsf{swap}=(h^1_A, D^A_\mathsf{swap})$ and
$T^B_\mathsf{swap}=(h^1_B, D^B_\mathsf{swap})$, but the problem is that the parties may not see each other transactions. For example, if $A$ is malicious and commits, but after that does not send $T^A_\mathsf{swap}$ (or any other information)
to $B$. If $B$ also committed, then for making the next transaction with $\mathsf{Tok}_A$ (swap happened) $B$ has to construct a certified transaction for $T^A_\mathsf{swap}$ without knowing
the contents of the transaction. If $A$ did not commit and refuses to communicate with $B$ then how can $B$ "convince" the swap predicate of his own token that $A$ did not commit in time?

The only additional information source for $B$ is the Unicity Service, which gives information of type $R[k]=v$, i.e. which indices $k$ correspond to which values. For example, $B$ can query $\unisrv$ for the status of $k^A_\mathsf{prep}=H(S^A_\mathsf{prep})$, and
if $R[k^A_\mathsf{prep}]=\bot$ at $\tau'>\tau_\mathsf{max}$, then $B$ knows that $A$ did not commit in time because a timely swap transaction would have caused $R[k^A_\mathsf{prep}]\neq \bot$ before $\tau_\mathsf{max}$. However, if $R[k^A_\mathsf{prep}]=v\neq\bot$ at $\tau'>\tau_\mathsf{max}$ then we do not know whether the transaction $T=(h^1_A,D)$ that defined $R[k^A_\mathsf{prep}]$ (and spent $S^A_\mathsf{prep}$) was a correct swap transaction,
because on one hand, $v$ is computed using a perfectly hiding commitment scheme and contains no information about $D$.
On the other hand, even if swap transactions were certified without the commitment scheme (like minting transactions), then still $v=H(\nu,x,\mathsf{aux})$ and $x$ is random. So for $v$ giving any useful information about the correctness of $T$, we define the following requirements to swap transactions:
\begin{enumerate}
\item Swap transactions are certified without the commitment scheme (similar to mint transactions).
\item The fields $x$ and $\mathsf{aux}$ in swap transactions must be known constants. We choose these constants to be $x=0^\ell$ and $\mathsf{aux}=\bot$.
\end{enumerate}
This means that if $\pubkey_A,h^1_A,\pubkey_B,h^1_B,\tau_\mathsf{max}$ are known, then also the value
\[
v^A_\mathsf{swap}= H(\nu_\mathsf{swap}(\pubkey_A,h^1_A,\pubkey_B,h^1_B,\tau_\mathsf{max};\cdot),0^\ell,\bot)
\]
is known. Under these requirements,  $\mathsf{Tok}_A$ is correctly transferred from the preparation state
$S^A_\mathsf{prep}=(\nu_\mathsf{prep}(\pubkey_A;\cdot),h^1_A)$
to a correct swap state only if
$R[k^A_\mathsf{prep}]=v^A_\mathsf{swap}$.
Therefore, we have the following rollback rule:
\medskip\medskip

\noindent \textbf{Rollback rule}: If
%$R[k^A_\mathsf{prep}]=\bot$ before the swap protocol and
$R[k^A_\mathsf{prep}]\neq v^A_\mathsf{swap}$ at some point $\tau'>\tau_\mathsf{max}$, then $A$ certainly did not commit in time.
\medskip

\noindent We also need a positive rule to decide that $A$ committed in time. Assume now that $R[k^A_\mathsf{prep}]=v^A_\mathsf{swap}$ at some point $\tau'\le \tau_\mathsf{max}$.
Does this imply that $A$ has committed in time?
The answer turns out to be yes under the assumption that no $H$-collisions occur in the system and the signature scheme used is existentially unforgeable. This follows from the non-blocking security of the system -- only $A$ can block the state $S^A_\mathsf{prep}$ with a request $Q=(\nu_\mathsf{prep}(\pubkey_A;\cdot), h^1_A,v^A_\mathsf{swap},u)$ to $\unisrv$ and once $A$ has done it, the corresponding inclusion proof $\pi$ enables $A$ to construct a certified transaction that transfers $\mathsf{Tok}_A$ to the state $S^A_\mathsf{swap}$. It may seem that we are done, but not yet. In practice, we cannot assume that $B$ can always obtain the inclusion proof for $R[k^A_\mathsf{prep}]=v^A_\mathsf{swap}$ from $\unisrv$ in time (i.e., before $\tau_\mathsf{max}$) because $A$ may have committed in the last minute, $\unisrv$ may not always be accessible promptly, and the clocks of $B$ and $\unisrv$ may not be ideally synchronized. \medskip

\noindent So, what can we imply if we only have information that $R[k^A_\mathsf{prep}]=v^A_\mathsf{swap}$ at some point $\tau'> \tau_\mathsf{max}$? Similarly to the discussion in the last paragraph, we can conclude that if no $H$-collisions occurred in the system, then a request of the form $Q=(\nu_\mathsf{prep}(\pubkey_A;\cdot), h^1_A,v^A_\mathsf{swap},u)$ was made to $\unisrv$ that defines $R[k^A_\mathsf{prep}]=v^A_\mathsf{swap}$, but we are not sure if it occurred before the timeout $\tau_\mathsf{max}$. That is why we will define the predicate $\nu_\mathsf{prep}(\pubkey_A;\cdot)$ in such a way that transferring $\mathsf{Tok}_A$ from the state $(\nu_\mathsf{prep}(\pubkey_A;\cdot), h^1_A)$ to the corresponding swap state with the ownership predicate $\nu_\mathsf{swap}(\pubkey_A,h^1_A,\pubkey_B,h^1_B,\tau_\mathsf{max};\cdot)$ is only possible before timeout $\tau_\mathsf{max}$. This is also the main reason why we need special preparation states. Now we finally have a proper execution rule:\medskip

\noindent \textbf{Execution rule}: If
%$R[k^A_\mathsf{prep}]=\bot$ before the swap protocol and
$R[k^A_\mathsf{prep}]=v^A_\mathsf{swap}$ at any point, no collisions occurred and the signature scheme is existentially unforgeable, then $A$ committed in time.
\medskip

\noindent Figure~\ref{fi:aswap} depicts the details of the commitment step in the atomic swap protocol, assuming that both parties successfully commit.

\begin{figure}[ht]
\begin{center}
\includegraphics[width=12cm]{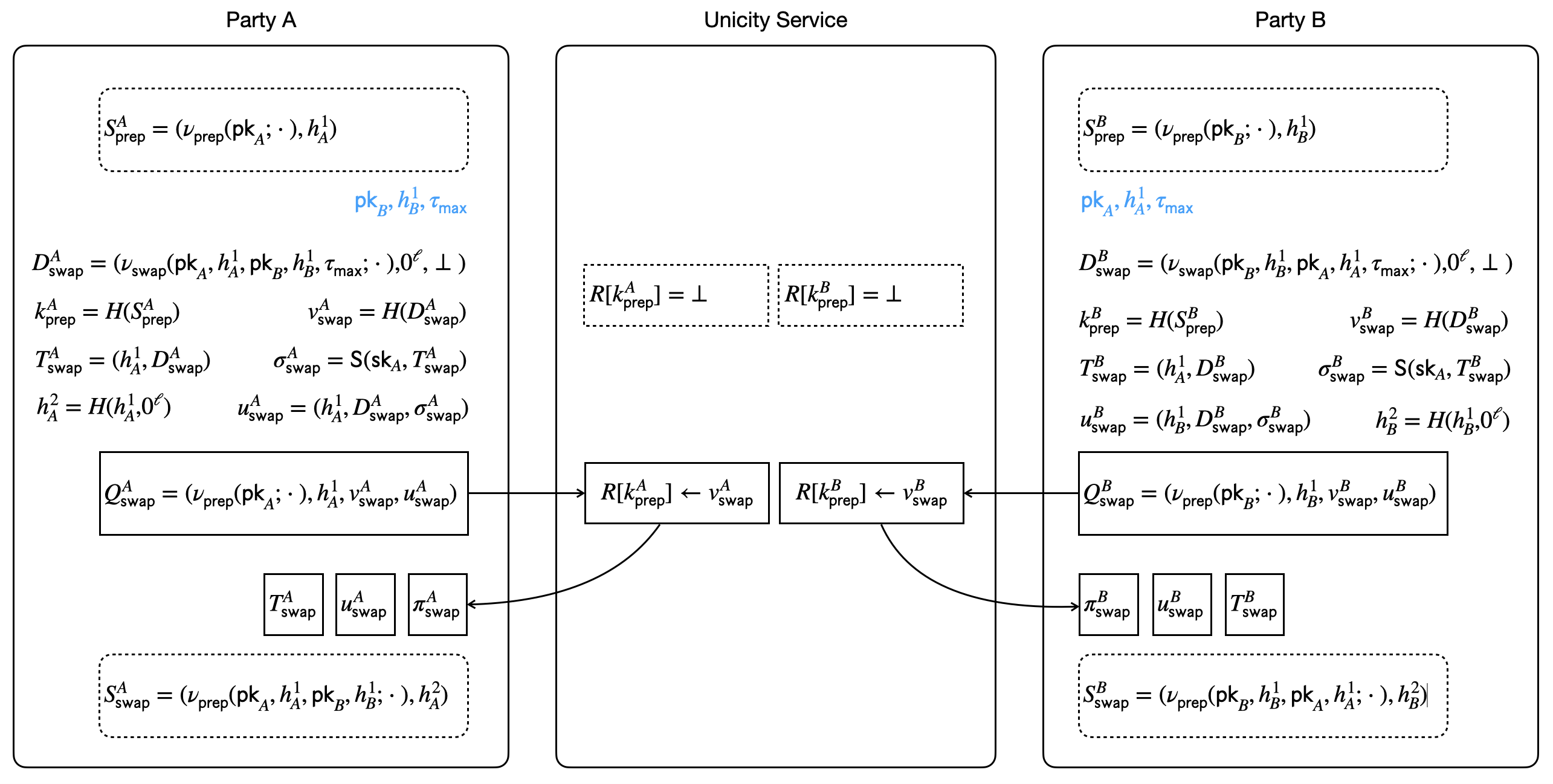}
\caption{Computations and message flow of the commit phase.}\label{fi:aswap}
\end{center}
\end{figure}

\subsection{Swap-Related Predicates}
In this section, we give definitions to the prepare predicate and the swap predicate and prove that their properties are sufficient to guarantee secure atomic swaps.

\subsubsection{Swap Preparation Predicate}

A special predicate $\nu_\mathsf{prep}(\pubkey;\cdot)$ that is used as an abstract ownership condition for preparing tokens for atomic swaps. Functionally, it is similar to the signature predicate, but has different unlock arguments $u=(h,D,\sigma)$, where $h$ is a hash value, $D$ is an arbitrary data structure, and $\sigma$ is a digital signature. The value of the predicate is defined as follows:\medskip

\noindent $\nu_\mathsf{prep}(\pubkey;\tau,m,h,D,\sigma) = 1$ iff:
\begin{enumerate}
\item $\sigver(\pubkey,m,\sigma)=1$
\item $m = H(h, H(D))$
\item If $D=(\nu_\mathsf{swap}(\pubkey,h,\pubkey',h',\tau_\mathsf{max};\cdot),0^\ell\!\!,\bot)$ (for some $\pubkey'\!,h'\!,\!\tau_\mathsf{max}$\!) then $\tau\!\le\!\tau_\mathsf{max}$.
%\begin{enumerate}
%\item[3.1.] $\pubkey_1=\pubkey$ ($T$ swaps a token controlled by $\pubkey$)
%\item[3.2.] $\tau\le \tau_\mathsf{max}$ ($T$ is executed in time)
%\item[3.3.] $x=0^\ell$
%\item[3.4.] $a=\bot$
%\end{enumerate}
\end{enumerate}
This definition implies that if $T=(h,D)$ is a swap transaction from the state $(\nu_\mathsf{prep}(\pubkey;\cdot),h)$ to a state with ownership predicate $\nu_\mathsf{swap}(\pubkey,h,\pubkey',h',\tau_\mathsf{max};\cdot)$, then it cannot be certified after the timeout $\tau_\mathsf{max}$ without $H$-collisions created in the system.

If $\pubkey_A,h^1_A,\pubkey_B,h^1_B,\tau_\mathsf{max}$ were the swap parameters agreed by $A$ and $B$ and $\mathsf{Tok}_A$ is in the state $S^A_\mathsf{prep}=(\nu_\mathsf{prep}(\pubkey_A;\cdot),h^1_A)$, then $A$ must commit at $\tau_\mathsf{max}$ ($\unisrv$ time) the latest, because later the request $Q^A_\mathsf{swap}=(\nu_\mathsf{prep}$$(\pubkey_A;\cdot),h^1_A,v^A_\mathsf{swap},u^A_\mathsf{swap})$ (Fig.~\ref{fi:aswap}) will not be accepted by $\unisrv$.

\subsubsection{Swap Predicate}

A special predicate $\nu_\mathsf{swap}(\pubkey,h,\pubkey',h',\tau_\mathsf{max};\cdot)$ that is used as an abstract ownership condition for tokens committed to swaps.
It has unlock arguments $u=(v,\pi,\sigma)$, where $v$ is a storage value (of $R$ in $\unisrv$), $\pi$ is an inclusion proof, and $\sigma$ is a digital signature.
The value of the predicate is defined as follows:\medskip

\noindent $\nu_\mathsf{swap}(\pubkey,h,\pubkey',h',\tau_\mathsf{max};\tau,m,v,\pi,\sigma) = 1$ iff:
\begin{itemize}
\item[1.] $f_\mathsf{inc}(H(S'_\mathsf{prep}),v,\pi)=1$ for $S'_\mathsf{prep}=(\nu_\mathsf{prep}(\pubkey';\cdot),h')$
\item[2.] For $v'_\mathsf{swap}=H(\nu_\mathsf{swap}(\pubkey',h',\pubkey,h,\tau_\mathsf{max};\cdot),0^\ell,\bot)$ either (a) or (b):
    \begin{itemize}
    \item[(a)] $\sigver(\pubkey',m,\sigma)\!=\!1 \wedge v=v'_\mathsf{swap}$
    \item[(b)] $\sigver(\pubkey,m,\sigma)\!=\!1 \wedge v\neq v'_\mathsf{swap}
    \wedge \tau_\mathsf{max}<\tau$
    \end{itemize}
\end{itemize}
\medskip

\noindent The inclusion proof verification function $f_\mathsf{inc}(k,v,\pi)\in\{0,1\}$ is evaluated via direct check $R[k]\stackrel{\mbox{?}}{=} v$ by $\unisrv$ and via the verification function $\mathcal{V}(k,v,\pi)$ by users.

In general, if $\mathsf{time}(\pi) \neq \tau$ then it is possible that users and $\unisrv$ (at $\tau$) evaluate $f_\mathsf{inc}(k,v,\pi)$ differently. However, it turns out that in the case of the swap protocol, this difference does not play any role.
%In a normal scenario, the swap predicate (with given arguments) is evaluated first at the $\unisrv$ side when the swap state is spent.

\begin{lemma}\label{le:atomic-rollback}
If $R[H(S'_\mathsf{prep})]\neq v'_\mathsf{swap}$ at $\tau>\tau_\mathsf{max}$, where $S'_\mathsf{prep}=(\nu_\mathsf{prep}(\pubkey';\cdot),h')$,  $v'_\mathsf{swap}=H(\nu_\mathsf{swap}(\pubkey',h',\pubkey,h,\tau_\mathsf{max};\cdot),0^\ell,\bot)$, and a token was in the state $S'_\mathsf{prep}$,
then the token was not properly committed
by its owner in the swap protocol with parameters
$\pubkey,h,\pubkey',h',\tau_\mathsf{max}$.\end{lemma}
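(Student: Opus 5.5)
We argue by contraposition: we assume the token was properly committed by its owner and show that then $R[H(S'_\mathsf{prep})]=v'_\mathsf{swap}$ at every $\tau>\tau_\mathsf{max}$. First we make precise what ``properly committed in the swap protocol with parameters $\pubkey,h,\pubkey',h',\tau_\mathsf{max}$'' means. The owner of $\pubkey'$, holding the token in state $S'_\mathsf{prep}=(\nu_\mathsf{prep}(\pubkey';\cdot),h')$, executes the swap transaction $T'_\mathsf{swap}=(h',D'_\mathsf{swap})$ with
\[
D'_\mathsf{swap}=(\nu_\mathsf{swap}(\pubkey',h',\pubkey,h,\tau_\mathsf{max};\cdot),0^\ell,\bot).
\]
Following the design rules for swap transactions, it is certified without the commitment scheme, so the transaction data hash is $H(D'_\mathsf{swap})=v'_\mathsf{swap}$. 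The certification sends the request $Q=(\nu_\mathsf{prep}(\pubkey';\cdot),h',v'_\mathsf{swap},u)$ to $\unisrv$, and $\unisrv$ accepts it, returning an inclusion proof. By item 3 of the definition of $\nu_\mathsf{prep}$, acceptance happens at some system time $\tau_0\le\tau_\mathsf{max}$.

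Next we trace the effect on $R$. Let $k=H(\nu_\mathsf{prep}(\pubkey';\cdot),h')=H(S'_\mathsf{prep})$. By the description of $\unisrv$, accepting $Q$ sets $R[k]\gets(v'_\mathsf{swap},u)$, which in the shorthand notation is $R[k]=v'_\mathsf{swap}$. We then invoke the monotonicity properties stated after the description of $\unisrv$. Elements are never removed ($R_0\subseteq R_1\subseteq\cdots$), $R$ is a partial function, and a key once blocked can never be reassigned, since requests require $R[k]=\bot$. System time never decreases. Hence for every later state, and in particular any state with system time $\tau>\tau_\mathsf{max}\ge\tau_0$, we have $R[k]=v'_\mathsf{swap}$. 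This contradicts the hypothesis $R[H(S'_\mathsf{prep})]\neq v'_\mathsf{swap}$ at $\tau$, which proves the lemma.

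The main obstacle is the first step: pinning down that proper commitment forces the stored value at key $H(S'_\mathsf{prep})$ to be exactly $v'_\mathsf{swap}$. This depends on the convention that swap transactions use the fixed $x=0^\ell$, $\mathsf{aux}=\bot$ and no hiding commitment, so the stored value is deterministic. We will state this explicitly as part of the definition of proper commitment.

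A subtlety is whether the hypothesis ``at $\tau>\tau_\mathsf{max}$'' refers to a state that is temporally after the commitment. We handle this by noting that system time is non-decreasing and that the commitment happened at $\tau_0\le\tau_\mathsf{max}<\tau$. Any state with time $\tau$ is therefore either after the commit request, or else the commit request was processed later at a time $\ge\tau>\tau_\mathsf{max}$, which is impossible by item 3 of $\nu_\mathsf{prep}$. No hardness assumption is needed for this direction, in contrast with the execution rule.
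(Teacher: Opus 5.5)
Your proposal is correct and follows the paper's proof. You assume proper commitment, note that the accepted request $Q=(\nu_\mathsf{prep}(\pubkey';\cdot),h',v'_\mathsf{swap},u')$ sets $R[H(S'_\mathsf{prep})]$ to $v'_\mathsf{swap}$ no later than $\tau_\mathsf{max}$, and conclude from the append-only property of $R$ that this value is still there at any $\tau>\tau_\mathsf{max}$. The one difference is harmless extra care: the paper builds ``before $\tau_\mathsf{max}$'' into the meaning of ``committed properly'', whereas you derive it from item 3 of $\nu_\mathsf{prep}$ and use monotonicity of system time to place the state at $\tau$ after the commit.
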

\begin{proof}
Assume to the contrary that the owner of the token committed properly. This means that the transaction $T'_\mathsf{swap}=(h',D'_\mathsf{swap})$, where $D'_\mathsf{swap}=(\nu_\mathsf{swap}(\pubkey',h',\pubkey,h,\tau_\mathsf{max};\cdot),0^\ell,\bot)$ was certified at $S'_\mathsf{prep}$ before $\tau_\mathsf{max}$. This implies that there was a successful request $Q=(\nu_\mathsf{prep}(\pubkey';\cdot),h',v'_\mathsf{swap},u')$
to $\unisrv$ before $\tau_\mathsf{max}$,
but such a request will induce $R[H(S'_\mathsf{prep})]\gets v'_\mathsf{swap}\neq\bot$ before $\tau_\mathsf{max}$ and hence, as $R$ is append-only,
$R[H(S'_\mathsf{prep})]= v'_\mathsf{swap}$
at $\tau>\tau_\mathsf{max}$, a contradiction.
\end{proof}

\begin{lemma}\label{le:atomic-success}
If $R[H(S'_\mathsf{prep})]= v'_\mathsf{swap}$ at some point, where $S'_\mathsf{prep}=(\nu_\mathsf{prep}(\pubkey';\cdot),h')$,  $v'_\mathsf{swap}=H(\nu_\mathsf{swap}(\pubkey',h',\pubkey,h,\tau_\mathsf{max};\cdot),0^\ell,\bot)$, a token was in the state $S'_\mathsf{prep}$, and no $H$-collisions occur, then a successful request $Q=(\nu_\mathsf{prep}(\pubkey';\cdot),h',v'_\mathsf{swap},u')$
to $\unisrv$ was made at $\tau\le\tau_\mathsf{max}$ that
certifies the swap transaction $T'_\mathsf{swap}=(h',D'_\mathsf{swap})$, where $D'_\mathsf{swap}=(\nu_\mathsf{swap}(\pubkey',h',\pubkey,h,\tau_\mathsf{max};\cdot),0^\ell,\bot)$ at state $S'_\mathsf{prep}$.
\end{lemma}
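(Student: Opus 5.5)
The argument traces the entry $R[H(S'_\mathsf{prep})]=v'_\mathsf{swap}$ back to the request that created it, and then uses the definition of $\nu_\mathsf{prep}$ to bound the time of that request.

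\textbf{Identifying the request.} Entries of $R$ are only created by accepted requests $Q=(\nu,\hat h,v,u)$ to $\unisrv$, and entries are never removed or overwritten. So there is a unique accepted request that set $R[k]$ for $k=H(S'_\mathsf{prep})=H(\nu_\mathsf{prep}(\pubkey';\cdot),h')$. By the description of $\unisrv$, this request satisfies $H(\nu,\hat h)=k$, and its stored value is $v=v'_\mathsf{swap}$. Since no $H$-collisions occur, $(\nu,\hat h)=(\nu_\mathsf{prep}(\pubkey';\cdot),h')$. Hence the request has the form $Q=(\nu_\mathsf{prep}(\pubkey';\cdot),h',v'_\mathsf{swap},u')$ and was accepted at some system time $\tau$ with $\nu_\mathsf{prep}(\pubkey';\tau,m,u')=1$, where $m=H(h',v'_\mathsf{swap})$.

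\textbf{Unpacking the predicate.} Write $u'=(\hat h,D,\sigma)$. Condition 2 of $\nu_\mathsf{prep}$ gives $H(h',v'_\mathsf{swap})=m=H(\hat h,H(D))$. Again by collision-freeness, $\hat h=h'$ and $H(D)=v'_\mathsf{swap}=H(D'_\mathsf{swap})$. A third application gives $D=D'_\mathsf{swap}=(\nu_\mathsf{swap}(\pubkey',h',\pubkey,h,\tau_\mathsf{max};\cdot),0^\ell,\bot)$.

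\textbf{Bounding the time.} Since $D$ has the swap form with first key $\pubkey'$ and hash $h'$, exactly as condition 3 of $\nu_\mathsf{prep}(\pubkey';\cdot)$ requires when evaluated with $h=h'$, condition 3 forces $\tau\le\tau_\mathsf{max}$. Condition 1 also gives a valid signature by $\pubkey'$ on $m$.

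\textbf{Certification.} Swap transactions are certified without the commitment scheme, so $\txhash=H(D'_\mathsf{swap})=v'_\mathsf{swap}$. The returned inclusion proof $\pi$ satisfies $\exttime(\pi)=\tau$ and $\univer(H(S'_\mathsf{prep}),v'_\mathsf{swap},\pi)=1$. The predicate check also holds at $\exttime(\pi)$, namely $\nu_\mathsf{prep}(\pubkey';\tau,m,u')=1$. Together these mean $(T'_\mathsf{swap},u',\pi)$ is a certified transaction at state $S'_\mathsf{prep}$, completing the argument.

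\textbf{Main obstacle.} The delicate point is the chain of collision-freeness steps. Each identification ($k$-preimage, the pair $(\hat h,H(D))$, and $D$ itself) must be justified by "no $H$-collisions". One must also take care that the $h$ appearing in condition 3 is bound to the unlocking argument $\hat h$, which is forced equal to $h'$ only after the second step. Once that is done, the time bound comes straight from condition 3.
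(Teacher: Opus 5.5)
Your proof is correct and follows the paper's argument step for step: you trace the entry to the unique accepted request and use collision-freeness to identify it as $(\nu_\mathsf{prep}(\pubkey';\cdot),h',v'_\mathsf{swap},u')$, then apply condition 2 of $\nu_\mathsf{prep}$ to get $D=D'_\mathsf{swap}$, condition 3 to get $\tau\le\tau_\mathsf{max}$, and the $\exttime$/$\univer$ properties of the returned proof for certification. Your explicit collision-freeness step forcing the first unlocking component to equal $h'$ is slightly more careful than the paper, which silently reuses the symbol $h$ for both the request's state hash and that component; you should likewise use a fresh symbol there rather than recycling $\hat h$.
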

\begin{proof}
If $R[H(S'_\mathsf{prep})]=v'_\mathsf{swap}$, then a successful request $\pi\gets \unisrv(\nu,h,v'_\mathsf{swap},u)$ at some point $\tau$ was received by $\unisrv$ so that $H(S'_\mathsf{prep})=H(\nu,h)$ and hence $\nu=\nu_\mathsf{prep}(\pubkey';\cdot)$ and $h=h'$. The request was successful then $u'$ is in the form $u'=(h,D,\sigma)$, and $\nu_\mathsf{prep}(\pubkey';\tau,H(h',v'_\mathsf{swap}),h,D,\sigma)=1$. From the definition of $\nu_\mathsf{prep}$ (2.), it follows that $m=H(h',v'_\mathsf{swap})=H(h',H(D))$ which implies $v'_\mathsf{swap}=H(D)$ and hence $D=(\nu_\mathsf{swap}(\pubkey',h',\pubkey,h,\tau_\mathsf{max};\cdot),0^\ell,\bot)$, which from the definition of $\nu_\mathsf{prep}$ (3.) implies
$\tau=\mathsf{time}(\pi)\le\tau_\mathsf{max}$.
As
\[
\nu_\mathsf{prep}(\pubkey';\mathsf{time}(\pi), H(h',v'_\mathsf{swap}), u') = 1\enspace.
\]
and $\univer(H(S'_\mathsf{prep}),v'_\mathsf{swap},\pi)=1$, we conclude that $(T'_\mathsf{swap},u',\pi)$ is a certified swap transaction at $S'_\mathsf{prep}$.
\end{proof}\medskip

\noindent If the signature scheme is existentially unforgeable and the private key of $\pubkey'$ is not compromised, then practically the conclusion of Lemma~\ref{le:atomic-success} means that the owner of the token committed properly.

The conditions $\mathcal{A}\equiv R[H(S'_\mathsf{prep})]=v'_\mathsf{swap}$
and $\mathcal{B}\equiv R[H(S'_\mathsf{prep})]\neq v'_\mathsf{swap}\wedge \tau_\mathsf{max}<\tau$ clearly cannot be both true at the same time. Moreover, if $\mathcal{B}$ holds at $\tau$, then $\mathcal{A}$ can never hold, because the proof of Lemma~\ref{le:atomic-success} suggests that if $\mathcal{A}$ is true at any time, then it must be true
also at some $\tau'\le \tau_\mathsf{max}$, but then, due to the append-only property, $\mathcal{A}$ must hold also at $\tau$.

Also, at any time $\tau>\tau_\mathsf{max}$, either $\mathcal{A}$ or $\mathcal{B}$ must be true, because always, either $R[H(S'_\mathsf{prep})]=v'_\mathsf{swap}$
or $R[H(S'_\mathsf{prep})]\neq v'_\mathsf{swap}$. Therefore, the swap predicate $\nu_\mathsf{swap}$ correctly implements the requirements of the swap protocol between $A$ and $B$:
\begin{itemize}
\item[1.] If $A$ and $B$ follow the protocol, then the swap happens
\item[2.] If $A$ or $B$ (or both) deviate from the protocol, then $A$ and $B$ will get their tokens back
\end{itemize}

\subsection{Detailed Message Flow of the Protocol}

In this subsection, we describe the message flow and computations during the swap protocol from the viewpoint of $A$.
\medskip

\noindent\textbf{Preparation of $A$}: To transfer $\mathsf{Tok}_A$ to the state $(\nu_\mathsf{prep}(\pubkey_A;\cdot),h^1_A)$, the party $A$ runs the following protocol that involves communication with $\unisrv$:
\begin{enumerate}
    \item $A$ computes: $D^A_\mathsf{prep}\gets (\nu_\mathsf{prep}(\pubkey_A;\cdot), 0^\ell,\bot)$
    \item $A$ computes: $v^A_\mathsf{prep}\gets H(D^A_\mathsf{prep})$
    \item $A$ computes: $T^A_\mathsf{prep}\gets (h^0_A, D^A_\mathsf{prep})$
    \item $A$ computes: $u^A_\mathsf{prep}\gets \sig(\prikey_A,H(h^0_A,v^A_\mathsf{prep}))$
    \item $A$ sends $\unisrv$: $Q^A_\mathsf{prep}=(\nu_\mathsf{sig}(\pubkey_A;\cdot),h^0_A,v^A_\mathsf{prep},u^A_\mathsf{prep})$
    \item $\unisrv$ computes: $\tau^A_\mathsf{prep}\gets \tau$
    \item $\unisrv$ computes: $k^A_\mathsf{sig}\gets H(\nu_\mathsf{sig}(\pubkey_A;\cdot),h^0_A)$
    \item $\unisrv$ checks: $R[k^A_\mathsf{sig}]=\bot$
    \item $\unisrv$ checks: $\nu_\mathsf{sig}(\pubkey_A;\tau^A_\mathsf{prep},H(h^0_A,v^A_\mathsf{prep}),u^A_\mathsf{prep})=1$
    \item $\unisrv$ obtains: $\pi^A_\mathsf{prep}$ for $Q^A_\mathsf{prep}$. Note that $\mathsf{time}(\pi^A_\mathsf{prep})=\tau^A_\mathsf{prep}$
    \item $\unisrv$ sets: $R[k^A_\mathsf{sig}]\gets (v^A_\mathsf{prep},u^A_\mathsf{prep})$
    \item $\unisrv$ sends $A$: $\pi^A_\mathsf{prep}$
    \item $A$ computes : $C^A_\mathsf{prep}\gets (T^A_\mathsf{prep},u^A_\mathsf{prep},v^A_\mathsf{prep},\pi^A_\mathsf{prep})$ and adds $C^A_\mathsf{prep}$ as a certified transaction to the ledger of $\mathsf{Tok}_A$.
\end{enumerate}
After these steps, $\mathsf{Tok}_A$ is in the state $(\nu_\mathsf{prep}(\pubkey_A;\cdot),h^1_A)$, where $h^1_A=H(h^0_A,0^\ell)$, and if
$B$ proceeds analogously, $\mathsf{Tok}_B$ is in the state $(\nu_\mathsf{prep}(\pubkey_B;\cdot),h^1_B)$, where $h^1_B=H(h^0_B,0^\ell)$. The parties now exchange the full token ledgers, check if the tokens are indeed in correct states, and agree on the timeout value $\tau_\mathsf{max}$. Party $A$ is now ready for the commitment phase. \medskip

\noindent\textbf{Commitment of $A$}: To transfer $\mathsf{Tok}_A$ to $(\nu^A_\mathsf{swap}(\pubkey_A,h^1_A,\pubkey_B,h^1_B,\tau_\mathsf{max};\cdot),h^2_A)$, the party $A$ runs the following protocol that involves communication with $\unisrv$:
\begin{enumerate}
    \item $A$ computes: $D^A_\mathsf{swap}\gets (\nu_\mathsf{swap}(\pubkey_A,h^1_A,\pubkey_B,h^1_B,\tau_\mathsf{max};\cdot), 0^\ell,\bot)$
    \item $A$ computes: $v^A_\mathsf{swap}\gets H(D^A_\mathsf{swap})$
    \item $A$ computes: $T^A_\mathsf{swap}\gets (h^1_A, D^A_\mathsf{swap})$
    \item $A$ computes: $\sigma^A_\mathsf{swap}\gets \sig(\prikey_A,H(h^1_A,v^A_\mathsf{swap}))$
    \item $A$ computes: $u^A_\mathsf{swap}\gets (h^1_A,D^A_\mathsf{swap},\sigma^A_\mathsf{swap})$
    \item $A$ sends $\unisrv$: $Q^A_\mathsf{swap}=(\nu_\mathsf{prep}(\pubkey_A;\cdot),h^1_A,v^A_\mathsf{swap},u^A_\mathsf{swap})$
    \item $\unisrv$ computes: $\tau^A_\mathsf{swap}\gets \tau$
    \item $\unisrv$ computes: $k^A_\mathsf{prep}\gets H(\nu_\mathsf{prep}(\pubkey_A;\cdot),h^1_A)$
    \item $\unisrv$ checks: $R[k^A_\mathsf{prep}]=\bot$
    \item $\unisrv$ checks: $\nu_\mathsf{prep}(\pubkey_A;\tau^A_\mathsf{swap},H(h^1_A,v^A_\mathsf{swap}),u^A_\mathsf{swap})=1$
    \item $\unisrv$ sets: $R[k^A_\mathsf{prep}]\gets (v^A_\mathsf{swap},u^A_\mathsf{swap})$
    \item $\unisrv$ obtains: $\pi^A_\mathsf{swap}$ for $Q^A_\mathsf{swap}$. Note that $\mathsf{time}(\pi^A_\mathsf{swap})=\tau^A_\mathsf{swap}$
    \item $\unisrv$ sends $A$: $\pi^A_\mathsf{swap}$
    \item $A$ computes : $C^A_\mathsf{swap}\gets (T^A_\mathsf{swap},u^A_\mathsf{swap},v^A_\mathsf{swap},\pi^A_\mathsf{swap})$ and adds $C^A_\mathsf{swap}$ as a certified transaction to the ledger of $\mathsf{Tok}_A$.
    \item $A$ sends $B$ : $C^A_\mathsf{swap}$
\end{enumerate}
$\mathsf{Tok}_A$ is now in state $(\nu_\mathsf{swap}(\pubkey_A,h^1_A,\pubkey_B,h^1_B,\tau_\mathsf{max};\cdot),h^2_A)$, where $h^2_A=H(h^1_A,0^\ell)$. If $B$ did the same then
$\mathsf{Tok}_B$ is in state $(\nu_\mathsf{swap}(\pubkey_B,h^1_B,\pubkey_A,h^1_A,\tau_\mathsf{max};\cdot),h^2_B)$, where $h^2_B=H(h^1_B,0^\ell)$
and $A$ receives $C^B_\mathsf{swap}=(T^B_\mathsf{swap},u^B_\mathsf{swap},v^B_\mathsf{swap},\pi^B_\mathsf{swap})$ from $B$. $A$ is now ready for successful finalization.\medskip

\noindent\textbf{Recovery of $A$}:  If $A$ has not received $C^B_\mathsf{swap}$ at $\tau_\mathsf{max}$, then:
\begin{enumerate}
    \item $A$ computes: $D^B_\mathsf{swap}\gets (\nu_\mathsf{swap}(\pubkey_B,h^1_B,\pubkey_A,h^1_A,\tau_\mathsf{max};\cdot), 0^\ell,\bot)$
    \item $A$ computes: $v^B_\mathsf{swap}\gets H(D^B_\mathsf{swap})$
    \item $A$ computes: $T^B_\mathsf{swap}\gets (h^1_B, D^B_\mathsf{swap})$
    \item $A$ computes: $k^B_\mathsf{prep}\gets H(\nu_\mathsf{prep}(\pubkey_B;\cdot),h^1_B)$
    \item $A$ sends $\unisrv$ : $Q_\mathsf{prf}(k^B_\mathsf{prep})$
    \item $\unisrv$ computes : $\tau'\gets \tau$
    \item $\unisrv$ computes : $v\gets R[k^B_\mathsf{prep}]$
    \item $\unisrv$ obtains : $\pi^B_\mathsf{prf}$ for $v=R[k^B_\mathsf{prep}]$.
    \item $\unisrv$ sends $A$ : $v,\pi^B_\mathsf{prf}$. (
    $\mathsf{time}(\pi^B_\mathsf{prf})=\tau'>\tau_\mathsf{max}$ and $\univer(k^B_\mathsf{prep},v,\pi^B_\mathsf{prf})=1$)
\end{enumerate}
If $v=(v^B_\mathsf{swap},u^B_\mathsf{swap})$ then $A$ computes
$\overline{C}^B_\mathsf{swap}=(T^B_\mathsf{swap},u^B_\mathsf{swap},v^B_\mathsf{swap},\pi^B_\mathsf{prf})$.
As $\univer(k^B_\mathsf{prep},v,\pi^B_\mathsf{prf})=1$ implies that
$R[k^B_\mathsf{prep}]=(v^B_\mathsf{swap},u^B_\mathsf{swap})$ at $\tau'=\mathsf{time}(\pi^B_\mathsf{prf})$, there was a request $Q=(\nu,\sthash,v^B_\mathsf{swap},u^B_\mathsf{swap})$ such that $H(\nu,\sthash)=k^B_\mathsf{prep}=H(\nu_\mathsf{prep}(\pubkey_B;\cdot),h^1_B)$
which implies $\nu=\nu_\mathsf{prep}(\pubkey_B;\cdot)$ and $\sthash=h^1_B$ if no $H$-collisions occur. Therefore, $\overline{C}^B_\mathsf{swap}$ is a certified transaction in $(\nu_\mathsf{prep}(\pubkey_B;\cdot),h^1_B)$ and can be used to extend the ledger of $\mathsf{Tok}_B$ (see also Lemma~\ref{le:atomic-success}),
and $A$ is now ready for successful finalization.\medskip

\noindent If $v=\bot$ or  $v=(v',u)$  where $v'\neq v^B_\mathsf{swap}$ then
$A$ is ready for unsuccessful finalization (Lemma~\ref{le:atomic-rollback}).
\medskip

\noindent\textbf{Successful finalization of $A$}: Party $A$ now controls $\mathsf{Tok}_B$.
The last certified transaction in the ledger of $\mathsf{Tok}_A$ is
$C^A_\mathsf{swap}=(T^A_\mathsf{swap},u^A_\mathsf{swap},v^A_\mathsf{swap},\pi^A_\mathsf{swap})$. The inclusion proof $\pi^A_\mathsf{swap}$ is the proof that $A$ have committed $\mathsf{Tok}_A$.
To transfer $\mathsf{Tok}_B$ from
$(\nu_\mathsf{swap}(\pubkey_B,h^1_B,\pubkey_A,h^1_A,\tau_\mathsf{max};\cdot),h^2_B)$
 to $(\nu^A_\mathsf{sig}(\pubkey_A;\cdot),h^3_A)$, the party $A$ runs the following protocol that involves communication with $\unisrv$:
\begin{enumerate}
    \item $A$ computes: $D^A_\mathsf{fin}\gets (\nu_\mathsf{sig}(\pubkey_A;\cdot), 0^\ell,\bot)$
    \item $A$ computes: $v^A_\mathsf{fin}\gets H(D^A_\mathsf{fin})$
    \item $A$ computes: $T^A_\mathsf{fin}\gets (h^2_B, D^A_\mathsf{fin})$
    \item $A$ computes: $\sigma^A_\mathsf{fin}\gets \sig(\prikey_A,H(h^2_B,v^A_\mathsf{fin}))$
    \item $A$ computes: $u^A_\mathsf{fin}\gets (v^A_\mathsf{swap},\pi^A_\mathsf{swap},\sigma^A_\mathsf{fin})$
    \item $A$ sends $\unisrv$: $Q^A_\mathsf{fin}=(\nu_\mathsf{swap}(\pubkey_B,h^1_B,\pubkey_A,h^1_A,\tau_\mathsf{max};\cdot),h^2_B,v^A_\mathsf{fin},u^A_\mathsf{fin})$
    \item $\unisrv$ computes: $\tau^A_\mathsf{fin}\gets \tau$
    \item $\unisrv$ computes: $k^A_\mathsf{fin}\gets H(\nu_\mathsf{swap}(\pubkey_B,h^1_B,\pubkey_A,h^1_A,\tau_\mathsf{max};\cdot),h^2_B)$
    \item $\unisrv$ checks: $R[k^A_\mathsf{fin}]=\bot$
    \item $\unisrv$ checks: $\nu_\mathsf{swap}(\pubkey_B,h^1_B,\pubkey_A,h^1_A,\tau_\mathsf{max};\tau^A_\mathsf{fin},H(h^2_B,v^A_\mathsf{fin}),u^A_\mathsf{fin})=1$.
    \item $\unisrv$ sets: $R[k^A_\mathsf{fin}]\gets (v^A_\mathsf{fin},u^A_\mathsf{fin})$
    \item $\unisrv$ obtains: $\pi^A_\mathsf{fin}$ for $Q^A_\mathsf{fin}$. Note that $\mathsf{time}(\pi^A_\mathsf{fin})=\tau^A_\mathsf{fin}$
    \item $\unisrv$ sends $A$: $\pi^A_\mathsf{fin}$
    \item $A$ computes : $C^A_\mathsf{fin}\gets (T^A_\mathsf{fin},u^A_\mathsf{fin},v^A_\mathsf{fin},\pi^A_\mathsf{fin})$ and adds $C^A_\mathsf{fin}$ as a certified transaction to the ledger of $\mathsf{Tok}_B$.
\end{enumerate}
$\mathsf{Tok}_B$ is now in the state $(\nu_\mathsf{sig}(\pubkey_A;\cdot),h^3_B)$, where $h^3_B=H(h^2_B,0^\ell)$, which means that $A$ has now full control over $\mathsf{Tok}_B$.
\medskip

\noindent\textbf{Unsuccessful finalization of $A$}: Party $A$ now controls $\mathsf{Tok}_A$.
The last record in the ledger of $\mathsf{Tok}_B$ is a certified transaction $C^B_\mathsf{swap}=(T^B_\mathsf{swap},u^B_\mathsf{swap},v^B_\mathsf{swap},\pi)$ (where $\pi\in\{\pi^B_\mathsf{prf},\pi^B_\mathsf{swap}\}$). In order to transfer the token $\mathsf{Tok}_A$ from the state
$(\nu_\mathsf{swap}(\pubkey_A,h^1_A,\pubkey_B,h^1_B,\tau_\mathsf{max};\cdot),h^2_A)$
to the state $(\nu^A_\mathsf{sig}(\pubkey_A;\cdot),h^3_A)$, the party $A$ runs the following protocol that involves communication with $\unisrv$:
\begin{enumerate}
    \item $A$ computes: $D^A_\mathsf{fin}\gets (\nu_\mathsf{sig}(\pubkey_A;\cdot), 0^\ell,\bot)$
    \item $A$ computes: $v^A_\mathsf{fin}\gets H(D^A_\mathsf{fin})$
    \item $A$ computes: $T^A_\mathsf{fin}\gets (h^2_A, D^A_\mathsf{fin})$
    \item $A$ computes: $\sigma^A_\mathsf{fin}\gets \sig(\prikey_A,H(h^2_A,v^A_\mathsf{fin}))$
    \item $A$ computes: $u^A_\mathsf{fin}\gets (v,\pi,\sigma^A_\mathsf{fin})$
    \item $A$ sends $\unisrv$: $Q^A_\mathsf{fin}=(\nu_\mathsf{swap}(\pubkey_A,h^1_A,\pubkey_B,h^1_B,\tau_\mathsf{max};\cdot),h^2_A,v^A_\mathsf{fin},u^A_\mathsf{fin})$
    \item $\unisrv$ computes: $\tau^A_\mathsf{fin}\gets \tau$
    \item $\unisrv$ computes: $k^A_\mathsf{fin}\gets H(\nu_\mathsf{swap}(\pubkey_A,h^1_A,\pubkey_B,h^1_B,\tau_\mathsf{max};\cdot),h^2_A)$
    \item $\unisrv$ checks: $R[k^A_\mathsf{fin}]=\bot$
    \item $\unisrv$ checks: $\nu_\mathsf{swap}(\pubkey_A,h^1_A,\pubkey_B,h^1_B,\tau_\mathsf{max};\tau^A_\mathsf{fin},H(h^2_A,v^A_\mathsf{fin}),u^A_\mathsf{fin})=1$
    \item $\unisrv$ sets: $R[k^A_\mathsf{fin}]\gets (v^A_\mathsf{fin},u^A_\mathsf{fin})$
    \item $\unisrv$ obtains: $\pi^A_\mathsf{fin}$ for $Q^A_\mathsf{fin}$. Note that $\mathsf{time}(\pi^A_\mathsf{fin})=\tau^A_\mathsf{fin}$
    \item $\unisrv$ sends $A$: $\pi^A_\mathsf{fin}$
    \item $A$ computes : $C^A_\mathsf{fin}\gets (T^A_\mathsf{fin},u^A_\mathsf{fin},v^A_\mathsf{fin},\pi^A_\mathsf{fin})$ and adds $C^A_\mathsf{fin}$ as a certified transaction to the ledger of $\mathsf{Tok}_B$.
\end{enumerate}
$\mathsf{Tok}_A$ is now in the state $(\nu_\mathsf{sig}(\pubkey_A;\cdot),h^3_A)$, where $h^3_A=H(h^2_A,0^\ell)$, which means that $A$ has now full control over $\mathsf{Tok}_A$.

\subsection{Atomic Swap with Hashed Timelock and Discussion}
Assume that $A$ owns $\mathsf{Tok}_A$
and $B$ owns $\mathsf{Tok}_B$.
Atomic swap protocol can also be designed with the hashed timelock predicate \cite{Herl18,Bitc18a,Bitc18b}  $\nu_\mathsf{htlc}(\pubkey,\pubkey',y,\tau_\mathsf{max};\cdot)$ as follows:
\begin{enumerate}
\item $A$ generates a random number $x$, computes $y\gets H(x)$ and sends $y$ to $B$.
\item $A$ and $B$ agree to two timeouts $\tau_A$ and $\tau_B$, where $\tau_A < \tau_B$ and the difference $\tau_B-\tau_A$ is sufficiently large.

\item $A$ transfers $\mathsf{Tok}_A$ to the lock state $S^A_\mathsf{htlc}=(\nu_\mathsf{htlc}(\pubkey_A,\pubkey_B,y,\tau_B;\cdot), h^1_A)$. In this state, $A$ will be able transfer the token back to herself after $\tau_B$
in case $B$ did not transfer the token before $\tau_B$. However, $B$ is not yet able to transfer the token because he does not know the pre-image $x$ of $y$.

\item $B$ checks that $\mathsf{Tok}_A$ is indeed in the state $S^A_\mathsf{htlc}$ and
transfers $\mathsf{Tok}_B$ to the state $S^B_\mathsf{htlc}=(\nu_\mathsf{htlc}(\pubkey_B,\pubkey_A,y,\tau_A;\cdot), h^1_B)$. In this state, $A$ is able to transfer $\mathsf{Tok}_B$ to her public key $\pubkey_A$ until the timeout $\tau_A$ and for transferring the token, she has to reveal $x$.

\item $A$ transfers $\mathsf{Tok}_B$ to the state with ownership $\nu_\mathsf{sig}(\pubkey_A;\cdot)$ and reveals $x$, which is included into the unlocking arguments $u=(x,\sigma_A)$.

\item $B$ checks the status of $\mathsf{Tok}_B$ via $\unisrv$ and obtains $x$ as a part of the unlocking arguments. He has now all information but a limited time until $\tau_B$ to transfer $\mathsf{Tok}_A$ to himself.

\item $B$ transfers $\mathsf{Tok}_A$ to the state with ownership $\nu_\mathsf{sig}(\pubkey_B;\cdot)$.
\end{enumerate}

\noindent This protocol has more steps compared to the symmetric atomic swap described above and, moreover, it involves a risk that if $B$ is late to transfer $\mathsf{Tok}_A$ at step 7 of the protocol (which may happen due to circumstances that $B$ does not control, such as network connection failures), then after $\tau_B$, party $A$ controls both tokens.

The conclusion is that the symmetric atomic swap protocol is better suited for the Unicity framework compared to the hashed timelock protocol, although we do not claim that the symmetric protocol would provide a better alternative to hashed timelock in an inter-blockchain setting where there are no reliable common reference services.

\pagebreak
\appendix

\section{Sequence Diagram with Predicates}
\begin{figure}[!h]
    \begin{center}
        \begin{sequencediagram}
            \newthread{S}{Sender}
            \newinst[3]{R}{Recipient}
            \newinst[2]{US}{Unicity Service}

            % Recipient prepares predicate and auxiliary data
            \begin{call}{R}{Prepare $\predi'$, $\auxd'$}{R}{}
            \end{call}

            % Recipient sends predicate to sender
            \begin{messcall}{R}{$\predi'$, $\auxd'$}{S}
            \end{messcall}

            % Sender creates transaction data
            \postlevel
            \postlevel
            \postlevel
            \begin{call}{S}{\shortstack[l]{
                    $x \gets \{0,1\}^\ell$     \\
                    $D = (\predi', x, \auxd')$ \\
                    $T = (\sthash, D)$         \\
                    $(\txhash, d) \gets \commit(H(D))$ \\
                    $h_T = H(\sthash, \txhash)$ \\
                    $u$ such that $\predi(\systime_\mathsf{exp}, h_T, u) = 1$
                }
                }{S}{}
            \end{call}

            % Sender requests certification from Unicity Service
            \begin{messcall}{S}{$Q = (\predi, \sthash, \txhash, u)$}{US}
                \postlevel
                \begin{call}{US}{\shortstack[l]{
                    $\predi(\systime, h_T, u) \stackrel{?}{=} 1$ \\
                    $R[H(\predi, \sthash)] \stackrel{?}{=} \bot$ \\
                    $R[H(\predi, \sthash)] \gets \txhash$
                    }}{US}{}
                \end{call}
            \end{messcall}
            \prelevel
            \begin{messcall}{US}{$\pinc$ (with $\systime$)}{S}
            \end{messcall}

            % Sender sends certified transaction to recipient
            \begin{messcall}{S}{$\mathcal{L} \cup (T, u, \txhash, d, \pinc)$}{R}
                \postlevel
                \postlevel
                \postlevel
                \postlevel
                \begin{call}{R}{\shortstack[l]{
                    $T.\sthash \stackrel{?}{=} H(h^\textsf{prev}_\mathsf{st},x_\textsf{prev})$ \\
                    $\open(\txhash, d) \stackrel{?}{=} H(T.D)$ \\
                    $\systime \gets \exttime(\pinc)$ \\
                    $\predi(\systime, H(\sthash, \txhash), u) \stackrel{?}{=} 1$ \\
                    $\univer(H(\predi, T.\sthash), \txhash, \pinc) \stackrel{?}{=} 1$ \\
                    Accept transaction if all checks pass
                    }}{R}{}
                \end{call}
            \end{messcall}
        \end{sequencediagram}
        \caption{One unicity transaction, where the predicate $\predi$ represents the current ownership, and the capability to unlock $\predi'$ defines the new owner. System time $\systime$ is extracted from the inclusion proof $\pinc$ and used in predicate evaluation by the future verifiers.}\label{fi:unicity-transaction-pred}
    \end{center}
\end{figure}

\end{document}